%
\documentclass[runningheads]{llncs}
\usepackage[T1]{fontenc}
%
\usepackage{graphicx}
\usepackage{float}
%
%
\usepackage{booktabs}
\usepackage{amsmath, amssymb} 
\usepackage[linesnumbered,ruled,vlined]{algorithm2e}
\usepackage{algpseudocode}

\usepackage{tikz}
\definecolor{color-a}{RGB}{64,143,191}
\definecolor{color-b}{RGB}{191,218,233}
\definecolor{color-c}{RGB}{195,179,214}
\definecolor{color-d}{RGB}{245,248,250}

\usepackage[misc]{ifsym}
\usepackage{tabularray}
\usepackage{rotating}
\usepackage{makecell}
\usepackage{xcolor}
\usepackage{adjustbox}

\usepackage{ntheorem}

\newtheorem{observation}{Observation}

\begin{document}
\title{The Fairness of Maximum Nash Social Welfare Under Matroid Constraints and Beyond
\thanks{This submission has been accepted by WINE 2024. \\ This work is supported in part by the National Natural Science Foundation of China
(Nos. 12171444, 12301418, 12471306) and Natural Science Foundation of
Shandong (No. ZR2022QA014). }
}
\titlerunning{The Fairness of Max-NSW Under Matroid Constraints and Beyond}
%
\author{Yuanyuan Wang \and Xin Chen $^{(\textrm{\Letter})}$ \and Qingqin Nong}

%
\authorrunning{Y. Wang et al.}
%
\institute{Ocean University of China, Qingdao 266100,  China
\\ \email{wyy8088@stu.ouc.edu.cn, \{chenxin1403,qqnong\}@ouc.edu.cn}
}

%
\maketitle              
\begin{abstract}

We study the problem of fair allocation of a set of indivisible items among agents with additive valuations, under matroid constraints and two generalizations: $p$-extendible system and independence system constraints. 
The objective is to find fair and efficient allocations in which the subset of items assigned to every agent satisfies the given constraint.
We focus on a common fairness notion of envy-freeness up to one item (EF1) and a well-known efficient (and fair) notion of the maximum Nash social welfare (Max-NSW).
By using properties of matroids, we demonstrate that the Max-NSW allocation, implying Pareto optimality (PO), achieves a tight $1/2$-EF1 under matroid constraints. This result resolves an open question proposed in prior literature \cite{suksompong2021}. 
In particular, if agents have 2-valued ($\{1, a\}$) valuations, we prove that the Max-NSW allocation admits $\max\{1/a^2, 1/2\}$-EF1 and PO. 
Under strongly $p$-extendible system constraints, 
we show that the Max-NSW allocation guarantees $\max\{1/p, 1/4\}$-EF1 and PO for identical binary valuations. Indeed, the approximation of $1/4$ is the ratio for independence system constraints and additive valuations. 
Additionally, for lexicographic preferences, we study possibly feasible allocations other than Max-NSW admitting exactly EF1 and PO under the above constraints.

\keywords{Nash social welfare \and EF1 \and PO  \and  Matroids\and Independence systems}

\end{abstract}
%
%
%


\section{Introduction}

Fair allocation of indivisible items among agents is a highly active problem in computational economics and artificial intelligence, due to its growing applications, e.g., Public Housing~\cite{deng2013}, Courses Allocation~\cite{budish2017},
the website of Spliddit (spliddit.org)~\cite{goldman2015}, and the company Fair Outcomes (fairoutcomes.com).
Central to this problem falls into two categories: fairness and efficiency. 
To capture fairness, envy-freeness (EF,~\cite{foley1967}) is a compelling criterion: every agent prefers her own bundle to another one's allocated bundle.
Unfortunately, an EF allocation does not always exist even for allocating one indivisible item between two agents. This motivates a series of less stringent notions of envy-freeness, e.g., envy-freeness up to one item (EF1). 
For general monotone valuations, an EF1 allocation can be computed in polynomial time~\cite{lipton2004}. 
For finding efficient EF1 allocations, we are interested in a well-known criterion of Nash social welfare (NSW) that calculates the product of agents' utilities. 
An allocation maximizing the Nash social welfare (Max-NSW) guarantees EF1~\cite{caragiannis2019} and PO.

While allocating indivisible items, we are more interested in relevant constraints, such as, cardinality constraints~\cite{biswas2018}, budget constraints~\cite{wu2021}, scheduling constraints~\cite{li2021}) and conflicting constraints~\cite{hummel2022}. 
These  categories of constraints can be formulated as special cases of independence system constraints. 
An independence system is a system $(E, \mathcal{F})$, where $E$ is a finite item set and $\mathcal{F}$ is a collection of (independent) subsets of $E$ that has hereditary property:  
if $D\in \mathcal{F}$ and $C\subseteq D$ then $C\in \mathcal{F}$.
As a powerful abstraction of independence, matroid structures~\cite{mestre2006,gourves2014,gourves2019,dror2021} have played a prominent role in combinatorial optimization. An independence system $(M,\mathcal{F})$ is a matroid if $\mathcal{F}$ also has augmentation property: for  $C, D \in \mathcal{F}$ with $|C| < |D|$, there is an item $x\in D\setminus C$ such that $C + x \in \mathcal{F}$~\footnote{The notation $C + x $ means $C \cup \{x\}$, likewise $C - x$ means $C \setminus \{x\}$.}. 
In particular, cardinality constraints are equivalent to partition constraints~\cite{biswas2018}. 

Due to the flexible properties of matroid (or independence system) structures, it is challenging to explore fair and efficient allocations under matroid constraints (or more general independence system constraints). 
Indeed, Suksompong~\cite{suksompong2021} illustrated that the existence problem of EF1 and PO allocation remained open under matroids constraints (in Section 4). 
In this paper, we are interested in making efforts towards this direction by Max-NSW allocations, approximate EF1 allocations, properties of matroids and other effective techniques.


\subsection{Our Contribution}
We have strived to examine the existence of Max-NSW and approximate EF1 allocations under three prominent constraints: matroids, $p$-extendible systems, and independence systems (the relationships between the three can be seen in Figure~\ref{fig1}). Throughout, we select a specific Max-NSW allocation guaranteeing Pareto Optimality (PO) under all three classes of constraints. 
The main contributions in this paper are summarized in Table~\ref{table1}.

\subsubsection{Matroid Constraints. (Section 3)} 
A system $(M, \mathcal{F})$ is a matroid if $M$ is a finite item set, and $\mathcal{F}$ satisfies that i) hereditary property: if $D\in \mathcal{F}$ and $C\subseteq D$, then $C \in \mathcal{F}$; ii) augmentation property: if $C, D\in \mathcal{F}$ and $|D|>|C|$, then there is an item $ x \in D\setminus C$ such that $C + x \in \mathcal{F}$. 
For general additive valuations, we illustrate that every Max-NSW allocation achieves $1/2$-EF1 under matroid constraints. 
This approximation ratio cannot be improved since we exemplify an instance that for arbitrary $\varepsilon > 0$, there is no feasible Max-NSW and $(1/2+\varepsilon)$-EF1 allocation under partition matroids. 
In particular, for identical valuations, Biswas et al.~\cite{biswas2018} have shown that an exact EF1 and PO allocation was guaranteed to exist by the Max-NSW allocation under matroid constraints. In this settings, we find an exact EF1 and PO allocation by leximin ordering. 
Furthermore, if all agents have binary valuations, Dror et al.~\cite{dror2021} computed an exact EF1 and PO allocation under partition matroids. 
We explore the extended settings of 2-valued valuations that each item's value falls into $\{1,a\}$ with $a > 1$ for each agent, and strikingly  demonstrate that every Max-NSW allocation is $\max\{1/a^2, 1/2\}$-EF1 and PO under partition matroids.

\subsubsection{Beyond Matroid Constraints. (Section 4)} 
The results focus on two general classes $p$-extendible systems and independence systems. 
We restrict the $p$-extendible system \cite{mestre2006} to a
strongly $p$-extendible system $(M, \mathcal{F})$ (Definition 7): 
if $C\in \mathcal{F}, D \in \mathcal{F}$ with $C \subset D$ and if $H\cap C=\emptyset$ such that $C \cup H\in \mathcal{F}$, then there exists a subset $Y \subseteq D \setminus C$ with $|Y| \leq p|H|$ such that $D \setminus Y \cup H\in \mathcal{F}$. 
Furthermore, we focus on strongly $p$-extendible systems, which still include all matroids.
For identical binary valuations, we show that every Max-NSW allocation achieves $ \max\{1/p, 1/4\}$-EF1 and PO under strongly $p$-extendible systems. 
Independence systems are systems that satisfy only hereditary property. For additive valuations, every Max-NSW allocation is $1/4$-EF1 
and the approximation is already tight. Interestingly, we further consider lexicographic preferences~\cite{saban2014}, under independence system constraints, and compute an exact EF1 and PO allocation by a greedy-method algorithm. 
\begin{table}
\small

\begin{talltblr}[
    caption={Summary of our main results. Identical, bi. and identical-bi refer to identical additive, binary additive, and identical binary additive valuations respectively.
    	LB and UB refer to lower bounds and upper bounds respectively. ``$\Rightarrow$ PO'' refers to outcomes guarantee Pareto optimality.},
    label={table1}
]{
    width=\linewidth,
    row{1-2} = {gray!20, font=\bfseries},
    hline{1,Z} = 1pt, 
    hline{3,4,7,8} = {solid},
    hline{2} = {solid}
}
      
     \SetCell[r=2]{l,m} \makecell{Constraints} &\SetCell[r=2]{l,m} \makecell{Valuations} & \SetCell[c=2]{c,m} Max-NSWs ($\Rightarrow$ PO)  &      & \SetCell[r=2]{c,m}  Others ($\Rightarrow$ PO) \\
                                    &             & \SetCell{c,m} LB     & \SetCell{c,m} UB                 & \\
    \SetCell[r=1]{c,m} \makecell{Partition \\ Matroids}    & \SetCell{c,m} additive   & \SetCell{c,m} $\frac{1}{2}$-EF1    &    $(\frac{1}{2}+\varepsilon)$-EF1    & 1-EF1~(bi.)\cite{dror2021}             \\ 
    \SetCell[r=3]{c,m} Matroids     & \SetCell{c,m} additive    & \SetCell{c,m} $\frac{1}{2}$-EF1    & $(\frac{1}{2}+\varepsilon)$-EF1  &  \\
                                    & \SetCell{c,m} identical   & \SetCell{c,m} 1-EF1 \cite{biswas2018}       &                          & 1-EF1(by leximin)\\ 
                                    
                                    & \SetCell{c,m} $\{1,a\}$-valued   & $\max\{\frac{1}{a^2}, \frac{1}{2}\}$-EF1         &                          &  \\  
    \makecell{Strongly \\ $p$-Extendible\\Systems} & \SetCell{c,m} identical-bi  & $\max\{\frac{1}{p}, \frac{1}{4}\}$-EF1 &$(\frac{2}{3}+\varepsilon)$-EF1 \makecell{} & \\
    \SetCell[r=2]{c,m} \makecell{Independence \\ Systems} & \SetCell{c,m} additive & \SetCell{c,m} $\frac{1}{4}$-EF1  & $(\frac{1}{4}+\varepsilon)$-EF1 & \\
                                    & \SetCell{c,m}  lexicographic      &          &            &  1-EF1 (by RR alg.)\\
\end{talltblr}
\end{table}

\subsection{Related Work}
There is a vast literature on fairly and efficiently allocating indivisible items without constraints.  
In particular, Caragiannis et al.~\cite{caragiannis2016,caragiannis2019} originated that a Max-NSW allocation is both strikingly EF1 and approximately Maximin share for additive valuations. 
Barman et al.~\cite{barman2018} provided an efficient greedy algorithm to compute a Max-NSW allocation for binary valuations while finding Max-NSW allocations is APX-hard in general~\cite{lee2017apx}. 
Benabbou et al.~\cite{benabbou2021} showed that Max-NSW and leximin allocations both processed the EF1 property for matroid rank valuations. 
Amanatidis et al.~\cite{amanatidis2021} established that a Max-NSW allocation was always envy-freeness up to any item (EFX,~\cite{chaudhury2020}) for 2-valued valuations. 
Plaut and Roughgarden~\cite{plaut2020} EFX allocations and PO allocations were not compatible with general valuations. 
Recently, Feldman et al.~\cite{feldman2024} demonstrated an optimal tradeoff that for any $0 \leq \alpha \leq 1$, an $\alpha$-EFX allocation can guarantee $1/(\alpha + 1)$-Max-NSW for additive valuations. 

Our work is related to fair and efficient allocations in constrained settings. 
Wu et al.~\cite{wu2021} showed that under budget constraints, the Max-NSW allocation achieved $1/4$-EF1 with tight approximation ratio. 
We extend this positive result to more general constraints (independence systems). 
Biswas and Barman~\cite{biswas2018} firstly considered special matroid constraints.
They developed efficient algorithms to compute EF1 allocations under laminar matroids for identical additive valuations. 
If items are labeled by goods or chores, Shoshan et al.~\cite{shoshan2023efficient} provided a polynomial-time algorithm to compute feasible allocations under partition constraints guaranteed PO and EF up to one good and one chore. 
Dror et al.~\cite{dror2021} demonstrated the existence of EF1 allocations by devising algorithms on the settings of heterogeneous partition matroids, and $n$ agents with binary additive valuations (or two agents with general additive valuations). 
Besides, more literature studied other fairness allocations (e.g., Maximin share) under matroid constraints, which can be traced back to references~\cite{hummel2022maximin,gourves2019,hummel2022}.
Beyond the above special settings, we will focus on central constraints of general matroids, $p$-extendible systems and even independent systems in this paper, and we further explore the efficiency of feasible allocations.

\section{Preliminaries}

Given an instance $\mathcal{I} = (N, M, (v_i)_{i\in N})$, where $M=\{g_1,\ldots, g_m\}$ is a finite set of items (or goods), $N = \{1,\ldots, n\}$ is set of agents. 
Each agent $i$ has an individual valuation function $v_i(\cdot): 2^{M} \rightarrow \mathbb{R}_{\geq 0}$.  
Throughout, we assume that $v_i(\cdot)$ is normalized: $v_i(\emptyset)=0$, and monotone: $v_i(S)\le v_i(T)$ for all $S \subseteq T\subseteq M$.
Our goal is to find fair and efficient allocations under broad constraints: matroids, $p$-extendible systems, and independence systems. 
The relationship between these three classes of constraints is in Figure~\ref{fig1}.

\subsection{Independence Systems and Matroids}

Let $\mathcal{M} = (E, \mathcal{F})$ be a set system including a finite set $E$ and a collection $\mathcal{F} \subseteq 2^{E}$ over a set $E$. An independence system satisfies hereditary property. 
\begin{definition}[Independence System]\label{def1} 
	A set system  $\mathcal{M} = (E, \mathcal{F})$ is an independence system if	
it satisfies the following properties:
	\begin{itemize}
		\item[(1)] $\emptyset \in \mathcal{F}$;
		\item[(2)] if  $D\in \mathcal{F}$ and $C\subseteq D$, then $C \in \mathcal{F}$;
	\end{itemize}
The elements of $\mathcal{F}$ are called independent set.
\end{definition}

Based on Definition~\ref{def1} and augmentation property, we define the central structure in combinatorial optimization, called matroids. 
\begin{definition}[Matroid]\label{def2}
An independence system
 $\mathcal{M}=(E, \mathcal{F})$ is a matroid if
 it satisfies the following property:
\begin{itemize}
	\item[(3)] if $C, D\in \mathcal{F}$ and $|D|>|C|$, then there exists $ x \in D\setminus C$ such that $C + x \in \mathcal{F}$. 
\end{itemize}
In addition, an independent set $B\in \mathcal{F}$ is a base if for all $x \in E \setminus B$, $B + x \notin \mathcal{F}$.
\end{definition}

Indeed, a base of $\mathcal{M}$ is a maximal independent subset of $E$.
Matroid structures has been extensively studied in combinatorial optimization. The following lemmas illustrate  key conclusions on matroids.

\begin{lemma}[\cite{brualdi1969comments}]\label{lemma1}
Let $\mathcal{M}=(E, \mathcal{F})$ be a matroid. If $B \in \mathcal{F}$ and $B'\in \mathcal{F}$ are two bases,
then there exists a bijection $\sigma: B \rightarrow B'$ such that 
$B + \sigma(z) - z$ and $B' + z - \sigma(z)$ are also bases in $\mathcal{F}$ for all $z \in B$. 
\end{lemma}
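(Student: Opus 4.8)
The plan is to recognize this statement as the classical symmetric basis-exchange property of matroid bases and to prove it by reduction to a single-element exchange followed by an induction. Since $\mathcal{M}$ is a matroid we have $|B| = |B'|$, and for $z \in B \cap B'$ the requested conditions read $B + z - z = B$ and $B' + z - z = B'$, which hold trivially; so it suffices to produce a bijection $\sigma$ that fixes $B \cap B'$ pointwise and matches $B \setminus B'$ to $B' \setminus B$ with the stated property. I would induct on $k = |B \setminus B'|$, the case $k = 0$ being immediate.

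The first ingredient is the single-element symmetric exchange: for each $z \in B \setminus B'$ there exists $w \in B' \setminus B$ with \emph{both} $B - z + w \in \mathcal{F}$ and $B' - w + z \in \mathcal{F}$. This follows from two standard facts: $B' - w + z$ is a base exactly when $w$ lies on the fundamental circuit of $z$ with respect to $B'$, and $B - z + w$ is a base exactly when $w$ lies on the fundamental cocircuit of $z$ with respect to $B$; since this circuit and this cocircuit both contain $z$ and a circuit and a cocircuit of a matroid never meet in a single element, they share a second element $w$, which does the job (alternatively, one cites \cite{brualdi1969comments}). Having fixed such a pair $(z, w)$, I would pass to the minor $\mathcal{M}' = \mathcal{M}/w \setminus z$: because $B - z + w$ is a base of $\mathcal{M}$, both $B - z$ and $B' - w$ are bases of $\mathcal{M}/w$ avoiding $z$ and hence bases of $\mathcal{M}'$, and $|(B - z) \setminus (B' - w)| = k - 1$. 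The inductive hypothesis applied to $\mathcal{M}'$ with these two bases yields a bijection $\tau \colon B - z \to B' - w$ realizing all symmetric exchanges inside $\mathcal{M}'$, and the candidate is $\sigma := \tau$ on $B - z$ together with $\sigma(z) := w$, which is a bijection from $B$ to $B'$.

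The step I expect to be the crux is the \emph{lift}: turning each exchange that $\tau$ certifies inside the minor $\mathcal{M}'$ into the exchange required of $\sigma$ in $\mathcal{M}$ itself. The delicate point is the bookkeeping of the contracted element $w$ and the deleted element $z$: a base of $\mathcal{M}/w \setminus z$ becomes a base of $\mathcal{M}$ only after re-inserting $w$, so the naive translation produces the set $B - z + w - y + \sigma(y)$ rather than the target $B - y + \sigma(y)$, and symmetrically on the $B'$ side. I would try to close this gap by restating both exchange conditions intrinsically --- in terms of the fundamental circuits of $B$ and the fundamental cocircuits of $B'$ --- and checking that deletion by $z$ together with contraction by $w$ preserves exactly those incidences, so that combining the inductive exchanges with the single-element symmetric exchange at $(z, w)$ suffices. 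A global alternative that bypasses the induction is to work in the bipartite exchange graph on $(B \setminus B') \cup (B' \setminus B)$, whose edges record which single exchanges are feasible, and to show via Hall's condition and the augmentation axiom that it has a perfect matching feasible \emph{simultaneously} for $B$ and for $B'$; guaranteeing that simultaneity is once more the heart of the matter.
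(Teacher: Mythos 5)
The paper gives no proof of this lemma --- it is imported verbatim from \cite{brualdi1969comments} --- so there is no in-paper argument to compare yours against; I can only judge the proposal on its own terms. Your first ingredient is correct: the circuit/cocircuit argument does establish the single-element symmetric exchange (for each $z\in B\setminus B'$ there is $w\in B'\setminus B$ with both $B-z+w$ and $B'-w+z$ bases). But the difficulty you flag at the end --- the lift from the minor, or equivalently the ``simultaneity'' in Hall's condition for the intersection of the two exchange graphs --- is not a bookkeeping obstacle that more care will remove: the statement you are trying to prove is false. A matroid in which every pair of bases admits a single bijection realizing \emph{simultaneous} symmetric exchanges is called \emph{base orderable}, and not every matroid is base orderable; the cycle matroid of $K_4$ is the standard counterexample. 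Concretely, take the complementary spanning trees $B=\{12,23,34\}$ and $B'=\{13,14,24\}$ of $K_4$ (writing $uv$ for the edge joining $u$ and $v$). Then $B-12+24=\{23,34,24\}$ and $B'-13+12=\{12,14,24\}$ are triangles, so the only admissible partner for $z=12$ is $w=14$; likewise $B-34+13=\{12,23,13\}$ and $B'-24+34=\{13,14,34\}$ are triangles, so the only admissible partner for $z=34$ is also $w=14$. No bijection with the stated property exists, which is exactly why your inductive translation yields $B-z+w-y+\sigma(y)$ rather than $B-y+\sigma(y)$, and why the common exchange graph can violate Hall's condition.

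What \cite{brualdi1969comments} actually proves is weaker on each side: (i) the single-element symmetric exchange you derived, and (ii) the one-sided bijective exchange, namely a bijection $\sigma\colon B\to B'$ with $B-z+\sigma(z)$ a base for every $z$, with no guarantee on $B'-\sigma(z)+z$; your induction on $|B\setminus B'|$ does go through for (ii), since the lift then has to preserve only one family of exchanges. So the honest verdict is that your proof cannot be completed because the lemma, as stated, asserts base orderability of all matroids and is an over-statement of Brualdi's theorem. The places where the paper invokes it (Corollary~\ref{cor1} and the bijection $\phi$ in the proof of Theorem~\ref{th1}) would need to be re-derived from (i) and (ii) alone --- for instance, the single swap of the pair $(g^*,\phi(g^*))$ used there really only requires a symmetric exchange for one well-chosen element, not a global bijection.
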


\begin{corollary} \label{cor1}
	Let $\mathcal{M}=(E, \mathcal{F})$ be a matroid.
	If $I, J\in \mathcal{F}$ are two independent sets with same cardinality, i.e., $|I| = |J| = k$,  then there exists a bijection $\sigma: I \rightarrow J$ such that 
	$I + \sigma(z) - z $ and $J + z - \sigma(z)$ are also in $\mathcal{F}$ for all $z\in I$. 
\end{corollary}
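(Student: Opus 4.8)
The plan is to reduce Corollary~\ref{cor1} to Lemma~\ref{lemma1} by passing to an auxiliary matroid in which $I$ and $J$ become bases of equal size, namely the \emph{truncation} of $\mathcal{M}$ to rank $k$. Since $I$ is independent in $\mathcal{M}$, we have $k = |I| \le r(\mathcal{M})$, where $r$ is the rank function. Set $\mathcal{M}_k = (E, \mathcal{F}_k)$ with $\mathcal{F}_k = \{S \in \mathcal{F} : |S| \le k\}$.

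First I would check that $\mathcal{M}_k$ is again a matroid. The hereditary property is immediate from that of $\mathcal{M}$. For the augmentation property, take $C, D \in \mathcal{F}_k$ with $|C| < |D| \le k$; applying augmentation of $\mathcal{M}$ to $C, D \in \mathcal{F}$ gives an $x \in D \setminus C$ with $C + x \in \mathcal{F}$, and since $|C + x| = |C| + 1 \le |D| \le k$ we get $C + x \in \mathcal{F}_k$. Hence $\mathcal{M}_k$ is a matroid, and its rank is exactly $k$ because $I \in \mathcal{F}_k$ has size $k$ while every member of $\mathcal{F}_k$ has size at most $k$.

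Next, observe that $I$ and $J$ are both bases of $\mathcal{M}_k$: each lies in $\mathcal{F} \supseteq \mathcal{F}_k$, hence in $\mathcal{F}_k$, and each has cardinality $k$, which equals the rank of $\mathcal{M}_k$, so each is a maximal independent set of $\mathcal{M}_k$. Applying Lemma~\ref{lemma1} to the two bases $I$ and $J$ of $\mathcal{M}_k$ yields a bijection $\sigma : I \to J$ such that $I + \sigma(z) - z$ and $J + z - \sigma(z)$ are bases of $\mathcal{M}_k$ for every $z \in I$. Since every base of $\mathcal{M}_k$ is in particular a member of $\mathcal{F}_k \subseteq \mathcal{F}$, we conclude $I + \sigma(z) - z \in \mathcal{F}$ and $J + z - \sigma(z) \in \mathcal{F}$ for all $z \in I$, which is exactly the assertion.

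I do not expect a substantive obstacle; the only point requiring care is the (standard) verification that the truncation $\mathcal{M}_k$ satisfies the augmentation axiom and has rank exactly $k$, so that $I$ and $J$ legitimately qualify as bases and Lemma~\ref{lemma1} applies verbatim. Should one prefer to sidestep truncation, one could instead restrict $\mathcal{M}$ to the ground set $I \cup J$ and then truncate, or pad $I$ and $J$ toward bases by a common extension; but the truncation route above is the most direct.
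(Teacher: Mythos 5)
Your proof is correct and follows exactly the route the paper itself indicates: apply Lemma~\ref{lemma1} to the truncated matroid $\mathcal{M}_k=(E,\mathcal{F}_k)$ with $\mathcal{F}_k=\{S\in\mathcal{F}:|S|\le k\}$, in which $I$ and $J$ become bases. Your write-up simply supplies the routine verifications (that the truncation is a matroid of rank $k$ and that bases of $\mathcal{M}_k$ lie in $\mathcal{F}$) that the paper leaves implicit.
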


In fact, Corollary~\ref{cor1} is a direct result of applying Lemma~\ref{lemma1} to a special case of matroid, called the {\textit{truncated matroid}. To be detailed, given a matroid $\mathcal{M} = (E, \mathcal{F})$, $\mathcal{M}_k=(E, \mathcal{F}_k)$ is a truncated matroid of $\mathcal{M}$ if $\mathcal{F}_k=\{S\in \mathcal{F}: |S|\le k\}$. 
Furthermore, combining capacities and categories, we can define another special case of matroid, a {\textit{partition matroid}} $\mathcal{M} = (E, \mathcal{F})$ if $E = \{E_1, \ldots, E_\ell\}$ can be partitioned into $\ell\leq |E|$ categories with corresponding capacities $k_1,\ldots, k_\ell$, such that $\mathcal{F} = \{S\subseteq E: |S\cap E_i| \leq k_i, \forall i = 1,\ldots, \ell\}$. Specially, if there is a single category, a partition matroid is referred as a {\textit{uniform matorid}}. 
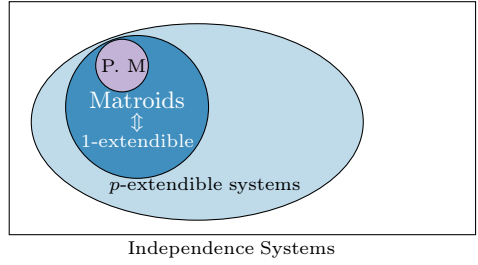
\begin{figure}[htp]
\centering
\def\r{1cm}
\begin{tikzpicture}
\draw[thick, black] (3.5*\r,0*\r) arc (0:360: 2.2cm and 1.3cm);
  \begin{scope}
    \clip (3.5*\r,0*\r) arc (0:360: 2.2cm and 1.3cm);
    \fill[color-b] (-1.2*\r,-1.5*\r) rectangle (5*\r,1.6*\r);
  \end{scope}
   \begin{scope}
    \clip (0.5*\r,0.2*\r) circle[radius=0.95*\r];
    \fill[color-a] (3.5*\r,0*\r) arc (0:360: 2.2cm and 1.3cm);
  \end{scope}
  \begin{scope}
    \clip (0.3*\r,0.75*\r) circle[radius=0.35*\r];
    \fill[color-c] (0.5*\r,0.2*\r) circle[radius=0.95*\r];
  \end{scope}
  \draw[black] (0.5*\r,0.2*\r) circle[radius=0.95*\r];
  \draw[black] (0.3*\r,0.75*\r) circle[radius=0.35*\r];
  \draw (-1.2*\r,-1.5*\r) rectangle (5*\r,1.6*\r);
  
  \node at (0.32*\r,0.75*\r) {\color{black}{\scriptsize P. M}};
  \node at (0.5*\r,0.3*\r) {\color{white}{\small Matroids}};
  \node at (0.5*\r,-0*\r) {\color{white}{\scriptsize $\Updownarrow$}};
  \node at (0.5*\r,-0.25*\r) {\color{white}{\scriptsize $1$-extendible}};
  \node at (1.4*\r,-0.85*\r) {{\scriptsize $p$-extendible systems}};
  \node at (1.4*\r,-1.1*\r) {{\scriptsize}};
  \node at (1.75*\r,-1.7*\r) {{\scriptsize Independence Systems}};
\end{tikzpicture}
\caption{Relations between several classes of constraints on independence systems. P. M refers to partition matroids.}\label{fig1}
\end{figure}

If we relax the augmentation property on matroids, a generalized structure can be derived, called $p$-extendible system.

\begin{definition}[$p$-extendible System~\cite{mestre2006}]\label{def3}
An independence system
$\mathcal{M}=(E, \mathcal{F})$ is a $p$-extendible system if it satisfies the following property:
\begin{itemize}
\item[(4)] if $C\in \mathcal{F}, D \in \mathcal{F}$ with $C \subset D$ and if $x\notin C$ such that $C + x \in \mathcal{F}$, then there exists a subset $Y \subseteq D \setminus C$ with $|Y| \leq p$ such that $D \setminus Y + x \in \mathcal{F}$. 
\end{itemize}
\end{definition}

\begin{lemma}[\cite{mestre2006}]\label{lemma2}
The system $(E, \mathcal{F})$ is a matroid if and only if is 1-extendible.
\end{lemma}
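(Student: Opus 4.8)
The plan is to prove the two implications separately, using for the easy direction the repeated application of matroid augmentation and for the hard direction an extremal argument on intersections.

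\emph{($\Rightarrow$) Every matroid is $1$-extendible.} Let $\mathcal{M}=(E,\mathcal{F})$ be a matroid and fix $C,D\in\mathcal{F}$ with $C\subset D$ and $x\notin C$ with $C+x\in\mathcal{F}$. If $D+x\in\mathcal{F}$ — in particular whenever $x\in D$ — we are done with $Y=\emptyset$, so assume $x\notin D$ and $D+x\notin\mathcal{F}$. I would start from the pair $(C+x,\,D)$, noting $|D|>|C|=|C+x|-1$, i.e. $|D|\ge|C+x|$, and repeatedly invoke property (3): as long as the current set $I$ (initialised to $C+x$) has $|I|<|D|$, there is $z\in D\setminus I$ with $I+z\in\mathcal{F}$. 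Since $C\cup\{x\}\subseteq I$ throughout and $x\notin D$, every added element lies in $D\setminus C$ and the added elements are pairwise distinct; after $|D|-|C|-1$ steps we obtain $I^{\ast}\in\mathcal{F}$ with $|I^{\ast}|=|D|$, $C+x\subseteq I^{\ast}\subseteq C\cup D$. A cardinality count shows $I^{\ast}$ omits exactly one element $y$ of $D\setminus C$, whence $I^{\ast}=(D-y)+x$; taking $Y=\{y\}\subseteq D\setminus C$ with $|Y|=1$ completes this direction (the case $|C+x|=|D|$ is the degenerate instance with zero augmentation steps).

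\emph{($\Leftarrow$) Every $1$-extendible system is a matroid.} It suffices to verify property (3). Fix $A,B\in\mathcal{F}$ with $|A|<|B|$; I want some $x\in B\setminus A$ with $A+x\in\mathcal{F}$. The key idea is to pass to a well-chosen superset of $A\cap B$: among all $B^{\ast}\in\mathcal{F}$ with $|B^{\ast}|>|A|$ and $B^{\ast}\subseteq A\cup B$ (the family is nonempty since $B$ itself qualifies), choose one maximising $|A\cap B^{\ast}|$. I claim $A\subseteq B^{\ast}$. If not, pick $a\in A\setminus B^{\ast}$ and set $C=A\cap B^{\ast}$; then $C\subsetneq B^{\ast}$ because $|A\cap B^{\ast}|\le|A|<|B^{\ast}|$, $C\in\mathcal{F}$, $a\notin C$, and $C+a\subseteq A\in\mathcal{F}$ so $C+a\in\mathcal{F}$. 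Applying $1$-extendibility to $C\subset B^{\ast}$ with the element $a$ gives $Y\subseteq B^{\ast}\setminus C=B^{\ast}\setminus A$ with $|Y|\le1$ and $B':=(B^{\ast}\setminus Y)\cup\{a\}\in\mathcal{F}$. Then $B'\subseteq A\cup B$ (we only deleted elements of $B^{\ast}\setminus A$ and inserted $a\in A$), $|B'|\ge|B^{\ast}|>|A|$, and $|A\cap B'|=|A\cap B^{\ast}|+1$ since $a\in A\setminus B^{\ast}$ and $Y\cap A=\emptyset$; this contradicts maximality of $|A\cap B^{\ast}|$. Hence $A\subseteq B^{\ast}$, so $A\subsetneq B^{\ast}$, and any $x\in B^{\ast}\setminus A\subseteq B\setminus A$ satisfies $A+x\subseteq B^{\ast}\in\mathcal{F}$, as required.

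The point I expect to require the most care, and the crux of the ($\Leftarrow$) direction, is that property (3) demands the new element be drawn from $B\setminus A$ rather than from some larger auxiliary set. This is exactly why the extremal family must be restricted to the sets $B^{\ast}\subseteq A\cup B$, and why one must check that the exchange step $B^{\ast}\mapsto(B^{\ast}\setminus Y)\cup\{a\}$ preserves both this containment and the strict cardinality condition $|B^{\ast}|>|A|$; once that bookkeeping is in place the argument goes through. The ($\Rightarrow$) direction is routine by comparison, the only subtlety being to observe that the augmentation process never re‑adds $x$ (so all removed elements come from $D\setminus C$) and terminates with precisely one such element removed.
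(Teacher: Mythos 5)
The paper does not prove this lemma at all --- it is imported verbatim from Mestre (2006) as a cited fact --- so there is nothing to compare your argument against; on its own merits, your proof is correct and complete. The ($\Rightarrow$) direction is the standard repeated-augmentation argument: starting from $C+x$ and augmenting from $D$ until the cardinalities match forces the resulting set to be $D-y+x$ for a single $y\in D\setminus C$, so $|Y|=1$ suffices. (One typographical slip: the containment should read $C+x\subseteq I^{\ast}\subseteq (C\cup D)+x$, since in the case you are treating $x\notin C\cup D$; this does not affect the count $|I^{\ast}\cap D|=|D|-1$ or the conclusion.) The ($\Leftarrow$) direction via the extremal choice of $B^{\ast}\subseteq A\cup B$ maximising $|A\cap B^{\ast}|$ is also sound: the key bookkeeping --- that $Y\subseteq B^{\ast}\setminus(A\cap B^{\ast})=B^{\ast}\setminus A$ is disjoint from $A$, so the exchange strictly increases $|A\cap B^{\ast}|$ while preserving $|B^{\ast}|>|A|$ and $B^{\ast}\subseteq A\cup B$ --- is exactly what makes the contradiction work, and you carry it out correctly. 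No gaps.
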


Based on the above lemma,  every matroid is also a $p$-extendible system for any $p\in N^+$. 
We remark that many natural problems fall in $p$-extendible framework, such as maximum weight $b$-matching, maximum profit scheduling, and matroid intersection.

\subsection{Fairness and Efficiency Criteria}


Given an independence system  $\mathcal{M} = (M, \mathcal{F})$ and an instance $(M, N, (v_i)_{i\in N})$. 
A {\textit{feasible allocation}} $X$ refers to a collection of $n$ disjoint bundles $X_1, \ldots, X_n$
such that $\cup_{i\in N} X_i \subseteq M$ and every bundle $X_i \subseteq M$ is an independent set in $\mathcal{F}$. Notice that all agents obey one independence system $(M, \mathcal{F})$. 
Due to independence constraints, a {\textit{complete allocation} $X$ (s.t. $\cup_{i\in N} X_i = M$) may not exist, and thus we admit incomplete allocations with unallocated items, i.e., $M \setminus (\cup_{i\in N} X_i) \neq \emptyset$. 
Unless otherwise specified, we will write ``an allocation'' instead of ``a feasible allocation'' in the remainder. 

In the independence settings (matroids, $p$-extendible or independence systems), our work is to explore ``fair'' and ``efficient'' allocations. 
The fairness notions we are interested in are based on envy-freeness. An allocation $X$ is {\textit{envy-free} (EF) if $X_i \in \mathcal{F}$ for every agent $i\in N$  and $v_i(X_i) \geq v_i(X_j)$ for every pair of agents $i$ and $j$. 
Due to independence constraints, a trivial but inefficient $A = (\emptyset, \ldots, \emptyset)$ is EF under $(M, \mathcal{F})$. We are also interested in a relaxed notion of classical EF property. 
\begin{definition}[EF1]
 Let $\alpha\in[0,1]$ be any constant.  An allocation $X$ is $\alpha$-approximate envy-free up to one item ($\alpha$-EF1) if  every bundle $X_i$ is in $\mathcal{F}$ and 
 \begin{equation*}
 \forall i, j \in N, \exists Z \subseteq X_j, |Z| \leq 1, v_i (X_i) \geq \alpha \cdot v_i (X_j \setminus Z).
 \end{equation*} 
In particular,  a 1-EF1 allocation is also called an exact EF1 allocation. 
\end{definition}

Besides the notion of EF1, we are also interested in the maximum Nash social welfare, which is a prominent notion of both fairness and efficiency.

\begin{algorithm}[h]
    \caption{The Max-NSW allocation}\label{alg1}
    
    {\bf Input:} an independence system $(M, \mathcal{F})$,  an instance $(N, M, (v_i)_{i \in N})$ \\
    
    {\bf Output:} A Max-NSW allocation $A^*$ \\
    
    Denote $N_X = \{i\in N: v_i (X_i) > 0\}$ for an allocation $X$; \\
    
  Select a largest agent set $N^* = \arg\max\{|N_{X}|: X \mbox{~with~} X_i \in \mathcal{F}, ~\forall i \}$; \\

  Denote by $\mathcal{T}_{N^*}$ collecting all Max-NSW allocations over $N^*$; 

  \If{$|\mathcal{T}_{N^*}| > 1$}
  {
  Select $T^*\in \mathcal{T}_{N^*}$ such that $|T^*| =\max\{|T|: T\in \mathcal{T}_{N^*}\}$;
  }
  \Else
  {
  $\mathcal{T}_{N^*} = \{T^*\}$;
  }

  $A^*_i \leftarrow T_i^*$ for all $i\in N^*$ and $A_i^* \leftarrow \emptyset$ otherwise;

    {\bf Return} $A^* = (A^*_1,\ldots, A^*_n)$  
\end{algorithm}

\begin{definition}[Max-NSW]
An allocation $A^*$ maximizes the Nash social welfare (NSW) if  
\begin{equation*}
NSW(A^*) = \max\{NSW(X): X_i\in \mathcal{F}, ~\forall i\},
\end{equation*}
where $NSW(X) = \prod_{i\in N} v_i (X)$ is the Nash social welfare of $X$.

In particular, we select the Max-NSW allocation returned by Algorithm~\ref{alg1}. 
\end{definition}
Specifically, Algorithm~\ref{alg1} firstly select a largest agent set $N^*$ that simultaneously received positive utilities. 
Then select a Max-NSW allocation $T^*$ that maximizes the number of agents in $N^*$. 
Finally, construct the output allocation by giving bundle $T_i^*$ to agents in $N^*$ and giving nothing to the remaining agents.

Based on Algorithm~\ref{alg1}, we show a crucial observation to capture that every Max-NSW allocation is PO. 

\begin{definition}[PO]
An allocation $A^*$ is Pareto optimal (PO) if no alternative allocation Pareto dominates $A^*$. Specifically, for every alternative $X$,
\begin{equation*}
\exists i\in N, v_i (X) > v_i (A^*_i) \Rightarrow \exists j\in N, v_j (X_i) < v_j (A_j^*).
\end{equation*}
\end{definition}
\begin{observation}\label{lemma3}
For general additive valuations, every Max-NSW allocation is also a PO allocation under independence system  (including matroid or $p$-extendible system) constraints. 
\end{observation}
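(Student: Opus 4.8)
The plan is to show that any allocation Pareto-dominating a Max-NSW allocation $A^*$ would itself have Nash social welfare at least as large, contradicting the fact that $A^*$ is (by the selection in Algorithm~\ref{alg1}) a maximizer. The subtlety, compared to the unconstrained case, is that feasibility of a dominating allocation must be respected: any Pareto-improving allocation $X$ has each $X_i$ independent in $\mathcal{F}$, so this is automatically compatible with the constrained optimization that defines Max-NSW; no extra work is needed to ``fix up'' $X$.

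First I would fix a Max-NSW allocation $A^*$ returned by Algorithm~\ref{alg1}, and suppose for contradiction that some feasible allocation $X$ Pareto-dominates it: $v_i(X_i) \ge v_i(A^*_i)$ for all $i \in N$ with strict inequality for at least one agent. I would split into two cases according to whether the support sets coincide. \textbf{Case 1: the set of agents with positive utility is unchanged}, i.e. $N_X = N_{A^*} = N^*$. Then on $N^*$ both $\prod_{i \in N^*} v_i(X_i)$ and $\prod_{i \in N^*} v_i(A^*_i)$ are products of strictly positive numbers, and termwise domination with at least one strict inequality gives $NSW(X) > NSW(A^*)$ (restricting the product to $N^*$, which is how NSW is effectively computed), contradicting optimality of $A^*$. \textbf{Case 2: $N_X \ne N^*$.} Since $X$ Pareto-dominates $A^*$, every agent in $N^*$ still gets positive value under $X$, so $N^* \subseteq N_X$; and since Algorithm~\ref{alg1} chose $N^*$ to be a \emph{largest} set of agents simultaneously receiving positive utility over feasible allocations, we cannot have $N^* \subsetneq N_X$. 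Hence $N_X = N^*$ after all, reducing to Case~1. So a strict Pareto improvement is impossible.

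A small additional point to address is the tie-breaking refinement inside Algorithm~\ref{alg1}: even a \emph{weak} Pareto improvement $X$ (same utility profile as $A^*$, which trivially does not contradict maximality of NSW) is ruled out by the PO definition's contrapositive form only if $X$ gives \emph{strictly} more to someone; the definition stated in the excerpt already requires strict improvement for some $i$, so a utility-profile-preserving $X$ is not a Pareto domination and needs no attention. Thus the only thing to verify carefully is the chain ``Pareto domination $\Rightarrow$ $N_X = N^*$ $\Rightarrow$ strictly larger NSW on $N^*$'', which is exactly what the two cases above establish.

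The main obstacle — really the only one — is making sure the argument genuinely uses the structure of Algorithm~\ref{alg1} rather than the naive claim ``Max-NSW $\Rightarrow$ PO,'' which can fail when some agents are forced to zero utility: a blunt Max-NSW maximizer over all $n$ agents could have $NSW = 0$ and be Pareto-dominated by another zero-NSW allocation. The resolution is precisely the two-level selection (first maximize $|N_X|$, then maximize the product over that support, then break ties by cardinality), and the proof should explicitly invoke the first two levels in Case~2 and Case~1 respectively. I do not expect the constraint structure (matroid, $p$-extendible, or general independence system) to enter at all beyond the bare fact that feasible allocations have independent bundles, since hereditary-type properties are never needed for this argument; the statement is really uniform across all three settings.
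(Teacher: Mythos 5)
Your proposal is correct, and it matches the argument the paper evidently intends (the observation is stated without an explicit proof, but is flagged as following from the structure of Algorithm~\ref{alg1}): a Pareto-dominating feasible $X$ would either enlarge the set of positively-served agents, contradicting the first-stage choice of $N^*$, or strictly increase the product over $N^*$, contradicting the second-stage choice of $T^*$. You also correctly identify that the constraint structure plays no role beyond feasibility, and that the naive ``Max-NSW $\Rightarrow$ PO'' claim fails without the two-level selection when the unrestricted product is zero.
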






\section{Matroid Constraints}

In this section, we explore the existence of EF1 and PO allocations under matroid constraints, which capture important properties, such as heredity and augmentation.
Recall that for binary valuations, Dror et al.~\cite{dror2021} computed an exact EF1 and PO allocation under partition matroid constraints. We will explore the results for general additive, 2-valued additive and identical additive valuations.

\subsection{General Additive Valuations}

For general additive valuations, we firstly state an negative result by constructing a worst instance under partition matroids. 
\begin{lemma}\label{lemma4}
    For any $\varepsilon > 0$, there exists an instance that no Max-NSW allocation can guarantee  $(\frac{1}{2} + \varepsilon)$-EF1 under partition matroid constraints. 

\end{lemma}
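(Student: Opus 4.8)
The plan is to exhibit a concrete family of instances under a partition matroid, parametrized so that the unique Max-NSW allocation is forced to give one agent a bundle that the other agent values roughly twice as much, even after removing a single item. The natural candidate is a two-agent instance with a partition matroid having one category of capacity one (so each agent can hold at most one item from that category) together with some "free" items. I would set up two items $g_1, g_2$ in a capacity-$1$ category plus a third item $g_3$ (or a small cluster of low-value items) outside it, and choose valuations of the form $v_1 = v_2 = v$ with $v(g_1) = v(g_2) = 1$ and $v(g_3) = \delta$ for a tiny $\delta = \delta(\varepsilon) > 0$ — or, if identical valuations make the envy vanish, perturb to $v_1(g_3) = \delta$, $v_2(g_3)$ slightly different. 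The point is that feasibility forces $g_1$ and $g_2$ to go to different agents, so the Nash-optimal (in fact only positive-NSW) allocation gives each agent one of them; then agent $i$ holding $g_1$ envies agent $j$ holding $\{g_2, g_3\}$ by exactly the factor $(1+\delta)/1$ after we are allowed to delete at most one item from agent $j$'s bundle — and the best single item to delete is $g_2$, leaving value $\delta$, or $g_3$, leaving value $1$. So $v_i(X_i) = 1 \geq \alpha \cdot v_i(X_j \setminus Z)$ forces $\alpha \leq 1$, which is not yet the bound I want.

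To actually drive the ratio down to $1/2$, I would instead make the envied agent's feasible bundle contain \emph{two} high-value items against the other agent's \emph{one}. Concretely: three items $g_1, g_2, g_3$ each of value $1$, with a partition matroid consisting of two categories, $\{g_1\}$ with capacity $1$ and $\{g_2, g_3\}$ with capacity $2$; then a feasible bundle can be $\{g_2, g_3\}$ (value $2$) for one agent and $\{g_1\}$ (value $1$) for the other, and this is the Max-NSW allocation since the product $2 \cdot 1 = 2$ beats the alternative $1 \cdot 1 = 1$ (where the capacity-$2$ category is split). The envious agent has value $1$, and after deleting one item from $\{g_2,g_3\}$ the remaining value is $1$, so $1 \geq \alpha \cdot 1$ still only gives $\alpha \le 1$. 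The fix is to make the big bundle have value $2 - O(\varepsilon)$ spread over items so that no single deletion removes more than $1 + O(\varepsilon)$: replace $\{g_2, g_3\}$ by many small items in the capacity-large category whose total value is close to $2$ but each individual value is tiny, while keeping $g_1$ of value $1$; any single-item removal still leaves value $\approx 2$, so $1 \geq \alpha \cdot (2 - O(\varepsilon))$, forcing $\alpha \le \tfrac{1}{2} + O(\varepsilon)$. Tuning the number of small items and their values against $\varepsilon$ finishes the construction.

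The key steps, in order: (i) fix $\varepsilon > 0$ and choose the number $k$ of small items and their common value $\eta$ so that $k\eta = 2 - \varepsilon'$ for a suitable $\varepsilon' = \varepsilon'(\varepsilon)$, with $\eta$ small enough that removing one of them changes a bundle value negligibly; (ii) define the partition matroid with a singleton capacity-$1$ category holding the value-$1$ item and one capacity-$k$ category holding the $k$ small items; (iii) verify that every feasible allocation has NSW at most the value of the "$\{g_1\}$ vs. all small items" split, hence that is (up to symmetry) the unique Max-NSW allocation — this uses that splitting the small items across the two agents strictly lowers the product by AM–GM; (iv) compute the EF1 guarantee of that allocation: the agent receiving $g_1$ has utility $1$, the other bundle minus any one item has value at least $2 - \varepsilon' - \eta$, so the best achievable $\alpha$ is $1/(2 - \varepsilon' - \eta) < \tfrac12 + \varepsilon$. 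The main obstacle I anticipate is step (iii): ruling out \emph{all} other feasible Max-NSW allocations, including ones that leave items unallocated or that assign the value-$1$ item and some small items to the same agent — I would handle this by a short case analysis on how the two categories are divided between the two agents, using monotonicity to assume allocations are maximal and AM–GM to compare products. A secondary subtlety is ensuring the instance is genuinely a \emph{partition} matroid (disjoint categories covering all items), which the construction above satisfies by design.
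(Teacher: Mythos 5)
There is a fatal gap at your step (iii): the allocation you designate as Max-NSW is not Max-NSW in your instance. The partition matroid you define (a singleton category $\{g_1\}$ with capacity $1$ and a category of $k$ small items with capacity $k$) imposes no actual restriction --- every subset of $M$ is independent --- so nothing prevents the agent who receives $g_1$ from also receiving small items. Writing $x$ for the total value of small items given to the agent holding $g_1$, the Nash product of a complete allocation is $(1+x)(2-\varepsilon'-x)$, which is maximized near $x=(1-\varepsilon')/2$ with value about $9/4$, strictly beating the product $2-\varepsilon'$ of your lopsided split. The true Max-NSW allocation is therefore nearly balanced and is exactly EF1, so the construction certifies nothing below $\alpha=1$. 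The same oversight appears in your three-item warm-up, where you compare against the product $1\cdot 1$ instead of $2\cdot 1$ for the feasible split $(\{g_1,g_2\},\{g_3\})$.

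The difficulty is not only with this instance but with the whole approach: your final construction gives both agents a single common valuation, and the paper itself cites (Biswas and Barman) that for identical valuations the Max-NSW allocation is \emph{exactly} EF1 under any matroid constraint, so no identical-valuation instance can witness a ratio below $1$. The paper's construction instead combines heterogeneous valuations with a genuinely binding cardinality constraint $|S|\le k+1$ on $2k+2$ items: agent $2$ assigns positive value only to the $k+1$ items that agent $1$ values most (at $2k+1-\delta$ each, versus $k$ for the rest), so Nash optimality forces all of those coveted items to agent $2$, while the capacity bound prevents agent $1$ from taking any of them in addition to her own $k+1$ items; agent $1$ then ends with value $k(k+1)$ against a bundle she values at $k(2k+1-\delta)$ even after one deletion, giving the ratio $\tfrac12+O(1/k)$. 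Some mechanism of this kind --- asymmetric valuations plus a capacity that actually binds --- is needed; tuning item values alone cannot push the ratio below $1$.
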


\begin{proof}
	Given an instance $\mathcal{I} = (N, M, (v_i)_{i\in N})$ with a matroid $\mathcal{M} = (M, \mathcal{F})$. 
	$M=\{g_1, \ldots, g_{2k+2}\}$ is an item set, and $\mathcal{F} = \{ S \mid |S| \leq k+1, S\subset M\} $ is a collection of independent sets. There are two agents $N = \{1,2\}$ with valuations $v_i(\cdot)$ as Table~\ref{table2}, where $k > 0$ is sufficiently large, and $\delta >0$ is sufficiently small.  
	\begin{table}
		\centering
		\caption{A worst instance  under matroid constraints.} \label{table2}
		\begin{tabular}{l  l l l l l l l}
			\toprule
			$g_j$     &$g_1$~~~~~   &$\cdots$~~~~~       &$g_{k+1}$ ~~~~~~~~~         & $g_{k+2}$     ~~~~~     &$\cdots$ ~~~~~ &$g_{2k+2}$~~~~~\\
			\midrule
			$v_1(g_j)$~~~~~~~ &$2k+1 - \delta$~~ &$\cdots$~~   &$2k+1 - \delta$~~      &$k$   &$\cdots$ & $k$ \\
			$v_2(g_j)$~~~~~~~ &$k$  &$\cdots$ &$k$      &$0$   &$\cdots$ & $0$ \\
			\bottomrule
		\end{tabular}
	\end{table} 
	Next, we will calculate Max-NSW allocations in this instance.
	Let $X $ be any feasible allocation under matroid $\mathcal{M}$, where $X = (X_1, X_2)$ and $|X_i| \leq k + 1$ for any agent $i = 1,2$.  Assume w.l.o.g that $X_1$ contain $x\in [0, k+1]$ items with the value of $2k + 1 - \delta$. 
	Due to $|X_i| \leq k +1$ for $i = 1,2$, we can get that $v_1(X_1) \leq (k + 1 - \delta) x + k(k + 1)$, 
	$v_2(X_2) \leq - kx + k (k+1)$, and $NSW(X) \leq \varphi(x) $, where $ \varphi(x) = - (k^2 + k - k\delta) x^2 + (k^2 + k - (k^2 + k)\delta) x + k^2 (k+1)^2$. Since that $\varphi(x)$ is decreasing over $[1, k+1]$ and $X\neq A^*$, we have that $NSW(X) < \max\{ \varphi(0), \varphi(1) \} = k^2(k+1)^2$.
	Thus, the allocation, that agent 1 gets $\{g_{k+2}, \ldots, g_{2k+2}\}$ and agent 2 gets $ \{g_1, \ldots, g_{k+1}\}$, is the unique Max-NSW allocation under matroid $\mathcal{M}$. 
	Let $A^* = (A^*_1, A^*_2)$ be an allocation with $A^*_1= \{g_{k+2}, \ldots, g_{2k+2}\}$ and $A^*_2 = \{g_1, \ldots, g_{k+1}\}$. 
	It holds that
	$$v_1(A^*_1) = k(k+1) < \left(\frac{1}{2} + \frac{1}{2k} \right) \cdot v_1(A_2 - g).$$ Define $\varepsilon = \frac{1}{2k}$, we can obtain the conclusion that no Max-NSW allocation can guarantee  $(\frac{1}{2} + \varepsilon)$-EF1 under partition matroid constraints. 
	\qed
\end{proof}

Now, we prove an important theorem of the section, which establishes an $\frac{1}{2}$-EF1 and PO allocation.

\begin{theorem}\label{th1}
For general additive valuations, every Max-NSW allocation guarantees $\frac{1}{2}$-EF1 and PO under matroid constraints.
\end{theorem}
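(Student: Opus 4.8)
The plan is to fix a Max-NSW allocation $A^*$ (the one produced by Algorithm~\ref{alg1}) and show that for every pair of agents $i,j$ there is an item $g \in A^*_j$ with $v_i(A^*_i) \ge \tfrac12\, v_i(A^*_j - g)$. Since $A^*$ is PO by Observation~\ref{lemma3}, this suffices. First I would dispose of the easy cases: if $v_i(A^*_i) = 0$, then by the choice of $N^*$ in Algorithm~\ref{alg1} agent $i$ cannot receive positive utility in any feasible allocation while all agents currently receiving positive utility still do, which forces $v_i(A^*_j)=0$ as well (otherwise swapping would help); and if $v_i(A^*_j)=0$ the claim is trivial. So assume $v_i(A^*_i)>0$ and $v_i(A^*_j)>0$, and suppose for contradiction that $v_i(A^*_i) < \tfrac12\, v_i(A^*_j - g)$ for \emph{every} $g \in A^*_j$, in particular for the item $g^\star \in A^*_j$ of maximum $v_i$-value, so $v_i(A^*_i) < \tfrac12\bigl(v_i(A^*_j) - v_i(g^\star)\bigr)$.

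The core of the argument is an exchange argument powered by the matroid structure. The two bundles $A^*_i$ and $A^*_j$ are independent sets in the matroid $(M,\mathcal{F})$; by truncating to $\min\{|A^*_i|,|A^*_j|\}$ and using Corollary~\ref{cor1} (or, after padding the smaller bundle up to the size of the larger one using the augmentation property restricted to $A^*_i \cup A^*_j$, using Lemma~\ref{lemma1}), I get a bijection $\sigma$ between (a subset of) $A^*_j$ and $A^*_i$ such that for each matched pair $(z,\sigma(z))$ the swapped bundles $A^*_i - \sigma(z) + z$ and $A^*_j - z + \sigma(z)$ are again independent, hence the modified profile is still a feasible allocation. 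The point of working inside the matroid is exactly that these single-item swaps preserve feasibility, something that fails for general independence systems — this is why the ratio is $1/2$ here versus $1/4$ later. Now I transfer from $A^*_j$ to $A^*_i$ the single item $z$ that is best for agent $i$ among those in the domain of $\sigma$; by an averaging / pigeonhole estimate, since $g^\star$ has the largest $v_i$-value in $A^*_j$ and at most one item of $A^*_j$ is unmatched, the transferred item $z$ satisfies $v_i(z) \ge \tfrac{1}{|A^*_j|}\bigl(v_i(A^*_j) - v_i(g^\star)\bigr)$ — more carefully, I want to pick $z$ so that $v_i(z)$ is at least the average of the matched items' values, which under the contradiction hypothesis is strictly larger than $v_i(\sigma(z))$ on average, hence for a suitable choice strictly larger than $v_i(\sigma(z))$ itself.

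Having performed the swap $A^*_i \to A^*_i - \sigma(z) + z$ and $A^*_j \to A^*_j - z + \sigma(z)$ with all other bundles unchanged, I compare Nash welfares. Agent $i$'s utility strictly increases (it gains $v_i(z) - v_i(\sigma(z)) > 0$). Agent $j$'s utility changes by $v_j(\sigma(z)) - v_j(z)$, which could be negative; here I use the contradiction hypothesis quantitatively: because $v_i(A^*_i)$ is so small relative to $v_i(A^*_j)$, the multiplicative gain $\frac{v_i(A^*_i)+v_i(z)-v_i(\sigma(z))}{v_i(A^*_i)}$ for agent $i$ is large enough to dominate the multiplicative loss $\frac{v_j(A^*_j)-v_j(z)+v_j(\sigma(z))}{v_j(A^*_j)} \ge \frac{v_j(\sigma(z))}{v_j(A^*_j)} \ge$ (bounded below using $v_j(A^*_j) \le$ something), contradicting maximality of $NSW(A^*)$; if instead agent $j$'s utility does not decrease, we get an immediate contradiction with the tie-breaking rules of Algorithm~\ref{alg1} or with strict optimality. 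I would set up the inequality so that the threshold $v_i(A^*_i) < \tfrac12 v_i(A^*_j - g^\star)$ is exactly what makes the product strictly increase. The main obstacle is making this last multiplicative comparison airtight: one has to choose the swapped item $z$ carefully (not just any matched item) and bound $v_j(z)$ from above — typically by transferring, in the worst case, an item that is light for $j$, or by summing the inequality over all matched pairs and invoking an averaging argument so that some single swap already beats the Nash product. Handling the possibility that $|A^*_i| \ne |A^*_j|$ (the unmatched item, which I assign to $g^\star$'s role in the ``up to one item'' slack) is the bookkeeping part that needs care but no new idea.
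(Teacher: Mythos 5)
Your overall strategy matches the paper's: use the matroid exchange property (Lemma~\ref{lemma1}/Corollary~\ref{cor1}, after padding the smaller bundle with zero-value dummies) to get a bijection under which every single swap preserves feasibility, then argue that under the contradiction hypothesis some single swap strictly increases the Nash product. The PO claim and the $NSW=0$ tie-breaking cases are also handled in essentially the paper's way. So the skeleton is right.

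However, there is a genuine gap at the one step that actually carries the proof: the choice of \emph{which} matched pair to swap, and the verification that the product $v_i(\cdot)v_j(\cdot)$ increases. You propose to transfer the item $z$ that is best for agent $i$ (or one whose $v_i$-value is at least the average over matched items). That selection rule does not work: it controls agent $i$'s additive gain $v_i(z)-v_i(\sigma(z))$ but says nothing about agent $j$'s loss $v_j(z)-v_j(\sigma(z))$, which for that particular pair can be arbitrarily large relative to $v_j(A^*_j)$, so the multiplicative loss for $j$ can swamp the multiplicative gain for $i$ no matter how small $v_i(A^*_i)$ is. The paper instead restricts to the set $P$ of pairs on which $i$ strictly gains (nonempty by the contradiction hypothesis) and selects the pair maximizing the \emph{ratio} $\rho(g,\phi(g)) = \bigl(v_i(\phi(g))-v_i(g)\bigr)/\bigl(v_j(\phi(g))-v_j(g)\bigr)$. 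A mediant-type inequality then gives $\rho(g^*,\phi(g^*)) \ge \bigl(v_i(A_{jP})-v_i(A'_{iP})\bigr)/v_j(A_{jP}) \ge \bigl(v_i(A_j)-v_i(A'_i)\bigr)/v_j(A_j)$, and the hypothesis $v_i(A_i) < \tfrac12 v_i(A_j - \phi(g^*))$ converts this into exactly the cross-multiplied inequality $\bigl(v_i(\phi(g^*))-v_i(g^*)\bigr)v_j(A_j) > \bigl(v_i(A'_i)+v_i(\phi(g^*))\bigr)\bigl(v_j(\phi(g^*))-v_j(g^*)\bigr)$ needed for the product to increase (with the case $v_j(\phi(g^*))-v_j(g^*)\le 0$ handled trivially). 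You do gesture at "summing over all matched pairs and invoking an averaging argument," which is the right instinct, but you neither commit to the ratio-maximizing selection nor derive the inequality, and your stated primary selection rule would fail. As written, the proof does not close.
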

\begin{proof}
Given an instance $\mathcal{I} = (N, M, (v_i)_{i\in N})$ with a matroid $\mathcal{M} = (M, \mathcal{F})$. 
$M=\{g_1, \ldots, g_m\}$ is item set and 
$N=\{1, \ldots, n\}$ is agent set.
Let $A=(A_1, \ldots, A_n)$ be a Max-NSW allocation.
  By Corollary 2, $A$ is a PO allocation. The following will show that $A$ is $1/2$-EF1. Suppose that there exist $i, j \in N$ such that 
\begin{equation}\label{eq1}
v_i (A_i) <\frac{1}{2} v_i(A_j - g), \forall g\in A_j. 
\end{equation}
We will discuss from two cases depending on whether $NSW(A)>0$ or not.

\noindent\textit{Case 1.}  $NSW(A)>0$. Due to $A$ is feasible under matroid $\mathcal{M}$, we have that $A_i \in \mathcal{F}$, $A_j \in \mathcal{F}$. 
If $|A_i| = |A_j|$, based on Corollary~\ref{cor1}, there exists a bijection
$\phi: A_i \to A_j$ such that $A_i - g + \phi(g) \in \mathcal{F}$ and $A_j - \phi(g)+g \in \mathcal{F}$.
 Otherwise if $|A_i| \neq |A_j|$, we can assume w.l.o.g. that $|A_i| < |A_j|$. Based on the definition of matroid, there must exist an item set $C=\{g_j^1, \ldots, g_j^l \} \subseteq A_j$ which suffices that $A_i \cup \{g_j^1, \ldots, g_j^l\} \in \mathcal{F}$ and $|A_i\cup \{g_j^1, \ldots, g_j^l\}| = |A_j|$. 
 We can also build a bijection from $A'_i = A_i \cup C$ to $A_j$, and every $g_0^k \in C$ is a copy item of $g_j^k$ with $v_i(g_0^k) = v_j (g_0^k) = 0$, $v_i(A_i)=v_i(A_i')$. Hence, we define $\phi:  A'_i \to A_j$ as a bijection from $A'_i$ to $A_j$. 

For any item pair $(g, \phi(g))\in \phi$, we define the value $\rho(g, \phi(g))$ as following: 
    $$
    \rho(g, \phi(g))=\frac{v_i(\phi(g))-v_i(g)}{v_j(\phi(g))-v_j(g)}. 
    $$
Observe that $\rho(g,\phi(g))=0$ if $v_i(\phi(g))-v_i(g) = 0$ and $v_j(\phi(g)) - v_j(g) \neq 0$, $\rho(g,\phi(g))=\infty$ if $v_j(\phi(g))-v_j(g)=0$ and $v_i(\phi(g)) - v_i(g) \neq 0$. We denote by $P=\{(g,\phi(g)), v_i(\phi(g))-v_i(g)>0\}$ and $Q=\{(g,\phi(g)), v_i(\phi(g))-v_i(g)\le 0\}$ the two types of sets of item pairs. 
For agent $i$, denote by $A_{iP}'=\{g\in A_i', \exists \phi(g)\in A_j, s.t. (g,\phi(g))\in P\}$ and $A_{iQ}'=\{g\in A_i', \exists \phi(g)\in A_j, s.t. (g,\phi(g))\in Q\}$ the item sets over $A'_i$.  
Similarly, for agent $j$, we denote $A_{jP}=\{\phi(g)\in A_j, \exists g\in A_i, s.t. (g,\phi(g))\in P\}$, $A_{jQ}=\{\phi(g)\in A_j, \exists g\in A_i', s.t. (g,\phi(g))\in Q\}$.
Let $(g^*,\phi(g^*))=\arg\max_{(g,\phi(g))\in P}{\rho(g,\phi(g))}$, we construct another allocation $B=(B_1, \ldots, B_n)$ where $B_i=A_i' + \phi(g^*) - g^*, B_j = A_j + g^* - \phi(g^*)$, and $B_k=A_k$ for any $k\ne i,j$. We will prove that 
\begin{align*}
v_i(B_i)v_j(B_j) - v_i(A_i)v_j(A_j)> 0.\tag{2}
\end{align*}
If the above Inequality~(2) holds, then $NSW(B) > NSW(A)$, which is contradicted with $NSW(A) = \max \{NSW(X) \mid X\mbox{~is~feasible}\}$. Thus, the following will focus on the proof of Inequality~(2).

Based on the definition of $A'_{iQ}$ and $A_{jQ}$, $v_i(A_{jQ})-v_i(A_{iQ}')\leq 0$, it holds that  
\begin{align*}\label{eq3}
      v_i(A_j)-v_i(A_i') & = [v_i(A_{jP})-v_i(A_{iP}')]+[v_i(A_{jQ})-v_i(A_{iQ}')]\\
        & \leq [v_i(A_{jP})-v_i(A'_{iP})].  \tag{3}
\end{align*}
Due to Inequality~(1), there must exist $(g,\phi(g))\in P$ such that $v_i(\phi(g))>v_i(g)$. It means that $(g,\phi(g))\in P$ and $\phi(g)\in A_{jP}$. Therefore, $(g^*,\phi(g^*))\in P$.

 If $v_j(\phi(g^*))-v_j(g^*)\le 0$,
  it is easy to verify that $NSW(B) > NSW(A)$, which is contradicted with that $A$ is a Max-NSW allocation. 
 
  If $v_j(\phi(g^*))-v_j(g^*)> 0$,
we can obtain that 
     \begin{align*}
        \frac{v_i(A_{jP})-v_i(A_{iP}')}{v_j(A_{jP})}&\ge\frac{v_i(A_{jP})-v_i(A_{iP}')}{v_j(A_j)}\ge \frac{v_i(A_{j})-v_i(A_i')}{v_j(A_j)}>\frac{v_i(A'_{i})+v_i(\phi(g^*))}{v_j(A_j)}.
     \end{align*}
The first inequality holds because $v_j(A_j)\ge v_j(A_{jP})$, the second inequality holds because of Inequality~(3), and the third inequality holds because of Inequality~(1).
Furthermore, by the definition of $(g^*, \phi(g^*))$, it holds that
      \begin{align*}
        \frac{v_i(\phi(g^*))-v_i(g^*)}{v_j(\phi(g^*))-v_j(g^*)}
        \ge \frac{v_i(A_{jP})-v_i(A_{iP}')}{v_j(A_{jP})-v_j(A_{iP}')}
        \ge\frac{v_i(A_{jP})-v_i(A_{iP}')}{v_j(A_{jP})}
        >\frac{v_i(A'_{i})+v_i(\phi(g^*))}{v_j(A_j)}.       
     \end{align*}
Notice that $v_i(\phi(g^*))-v_i(g^*)>0$ and $v_i(A_i')+v_i(\phi(g^*))>0$, we have 
    $$
    [v_i(\phi(g^*))-v_i(g^*)] \cdot v_j(A_j)>[v_i(A_i')+v_i(\phi(g^*))]\cdot[v_j(\phi(g^*))-v_j(g^*)].
    $$
Thus, 
     \begin{align*}
     v_i(B_i) & v_j(B_j)  - v_i(A_i)v_j(A_j)\\
     & =[v_i(\phi(g^*))-v_i(g^*)] \cdot v_j(A_j)-[v_i(A_i')+v_i(\phi(g^*))]\cdot[v_j(\phi(g^*))-v_j(g^*)]\\
     & ~~~~- v_i(g^*)\cdot [v_j(g^*)-v_j(\phi(g^*))] > 0.
     \end{align*}
\noindent\textit{Case 2.} $NSW(A)=0$.
If $v_i(A_i)\ge 0$ and $v_j(A_j)>0$, then the proof process is similar as Case 1.
Otherwise if $v_i(A_i)\ge 0$ and $v_j(A_j)=0$, we can construct another allocation $B=(B_1,\ldots, B_n)$ where 
$B_i=A_j$, $B_j=A_i$, $B_k=A_k$ for all $k\ne i, j$. It can be verified that $|\{i\in N: v_i (B_i) > 0\}|> |\{i\in N: v_i (A_i) > 0\}|$, which is contradicted with the definition of $A$. 

To be concluded, $A = (A_1, \ldots, A_n)$ is a $1/2$-EF1 and PO allocation. 
\qed
\end{proof}

\begin{corollary}\label{cor3}
For general additive valuations, every Max-NSW allocation guarantees $\frac{1}{2}$-EF1 and PO under partition matroid constraints.
\end{corollary}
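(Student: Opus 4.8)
The statement to prove is Corollary~\ref{cor3}: for general additive valuations, every Max-NSW allocation guarantees $\frac{1}{2}$-EF1 and PO under partition matroid constraints. Since this is stated as a corollary immediately following Theorem~\ref{th1}, the plan is essentially to invoke the theorem directly.

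\medskip

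The plan is as follows. First I would observe that every partition matroid is, by definition, a matroid: if $E = E_1 \cup \cdots \cup E_\ell$ with capacities $k_1, \ldots, k_\ell$ and $\mathcal{F} = \{S \subseteq E : |S \cap E_i| \le k_i \ \forall i\}$, then $\mathcal{F}$ trivially satisfies the hereditary property, and it satisfies the augmentation property because if $|D| > |C|$ with $C, D \in \mathcal{F}$, some category $E_i$ must contain strictly more elements of $D$ than of $C$, hence $|C \cap E_i| < |D \cap E_i| \le k_i$, so adding any $x \in (D \setminus C) \cap E_i$ keeps all capacity constraints satisfied and yields $C + x \in \mathcal{F}$. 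This is the routine verification; I would state it in one or two sentences rather than belabor it, possibly just citing the fact (already implicit in the excerpt's Figure~\ref{fig1} and the surrounding text) that partition matroids are a subclass of matroids.

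\medskip

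Second, given any instance under a partition matroid constraint $\mathcal{M} = (M, \mathcal{F})$, this is in particular an instance under a matroid constraint. Let $A$ be any Max-NSW allocation (as selected by Algorithm~\ref{alg1}). By Theorem~\ref{th1} applied to the matroid $\mathcal{M}$, $A$ guarantees $\frac{1}{2}$-EF1 and PO. Since the notion of feasible allocation, the NSW objective, and the definitions of $\frac{1}{2}$-EF1 and PO all depend only on $(M, \mathcal{F})$ and the valuations — not on any extra structure of the matroid — the conclusion transfers verbatim. Hence $A$ is $\frac{1}{2}$-EF1 and PO under the partition matroid, which is exactly the claim.

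\medskip

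There is no real obstacle here: the corollary is a pure specialization of Theorem~\ref{th1}, and the only content is the (easy) observation that partition matroids are matroids. If anything, the one thing worth a sentence is making explicit that the ``PO'' part also specializes, which follows since Observation~\ref{lemma3} (equivalently Corollary~2 as cited in the proof of Theorem~\ref{th1}) already covers all independence systems, a fortiori partition matroids. I would keep the proof to two or three lines.

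\medskip

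\noindent\emph{Proof.} Every partition matroid is a matroid: the hereditary property is immediate, and for augmentation, if $C, D \in \mathcal{F}$ with $|D| > |C|$ then some category $E_i$ has $|C \cap E_i| < |D \cap E_i| \le k_i$, so any $x \in (D \setminus C) \cap E_i$ satisfies $C + x \in \mathcal{F}$. Thus an instance under a partition matroid constraint is a special case of an instance under a matroid constraint, and every Max-NSW allocation returned by Algorithm~\ref{alg1} is the same object in both settings. Applying Theorem~\ref{th1} yields that this allocation is $\frac{1}{2}$-EF1 and PO. \qed
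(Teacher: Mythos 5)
Your proposal is correct and matches the paper's (implicit) argument exactly: the corollary is stated without proof precisely because a partition matroid is a matroid, so Theorem~\ref{th1} applies verbatim, and your verification of the hereditary and augmentation properties for partition matroids is sound. Nothing is missing.
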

Combining with Lemma~\ref{lemma4}, we remark that the approximation of $1/2$ is a tight ratio under partition matroid constraints.

\subsection{Special Cases of Additive Valuations}

This section considers two special cases of additive valuations: identical additive  and 2-valued additive  valuations. 
For identical additive valuations, Biswas et al.~\cite{biswas2018} showed that the Max-NSW allocation admitted 
exact EF1 and PO under matroid constraints.
We provide an exact EF1 and PO allocation in this settings by leximin ordering. 
\begin{theorem}\label{th2}
    For identical valuations, every leximin allocation guarantees exactly EF1 and PO under matroid constraints.
\end{theorem}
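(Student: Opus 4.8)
The plan is to verify the two properties separately; Pareto optimality is immediate and the EF1 guarantee is the real content. For PO, I would observe that \emph{any} leximin allocation is Pareto optimal, irrespective of the valuation class: if a feasible allocation $Y$ Pareto-dominated the leximin allocation $X$, then $v_k(Y_k)\ge v_k(X_k)$ for all $k$ with strict inequality for some agent, so the utility profile of $Y$ sorted in nondecreasing order dominates that of $X$ coordinatewise and is not identical to it, hence is lexicographically strictly larger — contradicting the choice of $X$. (We use only that there are finitely many feasible allocations, so a leximin one exists.)

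For EF1, write $v:=v_i$ for the common additive valuation and note that we may delete from $M$ every item of value $0$: this changes neither the utility profile of any allocation (so leximin allocations are unaffected) nor the EF1 condition, so henceforth assume every item has positive value. Suppose, towards a contradiction, that the leximin allocation $X$ fails EF1, witnessed by agents $i,j$ with $v(X_i)<v(X_j\setminus\{g\})$ for every $g\in X_j$. Since the right-hand side equals $v(\emptyset)=0$ when $|X_j|\le 1$, this forces $|X_j|\ge 2$, and by monotonicity also $v(X_i)<v(X_j)$. The strategy is to build a feasible allocation $Y$ that agrees with $X$ on every agent other than $i,j$ and satisfies $v(Y_i)>v(X_i)$ and $v(Y_j)>v(X_i)$; I will then note that $Y$ leximin-dominates $X$, because in the nondecreasing sorted profile the coordinate contributed by $v(X_i)$ gets replaced by a strictly larger value while no coordinate below that level changes — contradicting leximin-optimality of $X$. (A point to handle carefully here is ties at the value $v(X_i)$: one checks that the first coordinate at which the two sorted profiles differ strictly increases.)

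It remains to construct $Y$ by matroid exchange, in three cases according to $|X_i|$ versus $|X_j|$. If $|X_i|<|X_j|$, the matroid augmentation property yields $g\in X_j\setminus X_i$ with $X_i+g\in\mathcal F$; take $Y_i:=X_i+g$ and $Y_j:=X_j-g$, and verify $v(Y_i)=v(X_i)+v(g)>v(X_i)$ (positivity) and $v(Y_j)=v(X_j)-v(g)>v(X_i)$ (the envy inequality, applied to this $g$). If $|X_i|=|X_j|$, Corollary~\ref{cor1} gives a bijection $\sigma:X_i\to X_j$ with $X_i-z+\sigma(z)\in\mathcal F$ and $X_j-\sigma(z)+z\in\mathcal F$ for all $z$; since $\sum_z v(\sigma(z))=v(X_j)>v(X_i)=\sum_z v(z)$, some $z^*$ has $v(\sigma(z^*))>v(z^*)$, and $Y_i:=X_i-z^*+\sigma(z^*)$, $Y_j:=X_j-\sigma(z^*)+z^*$ work, using $v(Y_j)\ge v(X_j)-v(\sigma(z^*))>v(X_i)$ via the envy inequality with $g=\sigma(z^*)\in X_j$. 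The delicate case is $|X_i|>|X_j|$, where one cannot in general enlarge $X_i$ at all: here I would first repeatedly apply augmentation to grow $X_j$ by elements of $X_i$ into an independent set $X_j'=X_j\cup W$ with $W\subseteq X_i$ and $|X_j'|=|X_i|$, then apply Corollary~\ref{cor1} to the equal-size independent sets $X_i$ and $X_j'$ to obtain a bijection $\sigma:X_i\to X_j'$; summing the inequality $v(\sigma(z))\le v(z)$ over the $|X_j|$ elements $z$ with $\sigma(z)\in X_j$ would give $v(X_j)\le v(X_i)$, a contradiction, so some $z^*$ satisfies $\sigma(z^*)\in X_j$ and $v(\sigma(z^*))>v(z^*)$, whence $Y_i:=X_i-z^*+\sigma(z^*)\in\mathcal F$ and $Y_j:=X_j-\sigma(z^*)\in\mathcal F$ (heredity), leaving $z^*$ unallocated (permitted), with the two value bounds holding exactly as before. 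The main obstacle is precisely this last case: the matroid exchange tools are naturally stated for independent sets of equal size, while here the envied agent may hold the larger bundle, and the resolution is the padding trick together with the counting argument that locates a value-increasing exchange pair lying inside the genuine part $X_j$.
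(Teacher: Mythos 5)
Your proposal is correct and follows essentially the same route as the paper: assume an EF1 violation between $i$ and $j$, use the matroid exchange machinery (augmentation and Corollary~\ref{cor1}, after padding the smaller bundle to equalize cardinalities) to locate a value-improving swap, and conclude that the resulting allocation leximin-dominates $X$, a contradiction; the PO argument is the same standard observation. If anything, your treatment is more careful than the paper's: you make the tie-handling in the sorted-profile comparison explicit, and you work out in full the case $|X_i|>|X_j|$ (padding $X_j$ with real elements of $X_i$ and using a counting argument to find an exchange pair landing inside the genuine part of $X_j$), which the paper dismisses as ``similar.''
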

\begin{proof}
	
	Given an instance $\mathcal{I} = (N, M, v)$ with a matroid $\mathcal{M} = (M, \mathcal{F})$. 
	$M=\{g_1, \ldots, g_m\}$ is item set and 
	$N=\{1, \ldots, n\}$ is agent set. Notice that all agents share an identical valuation function $v(\cdot)$.
	Let $A = (A_1,\ldots, A_n)$ be a  leximin allocation such that $A_i\in \mathcal{F}$ for all $i$. By the definition of leximin,  $A \succeq X$ for any allocation $X$ and ``$\succeq$'' is referred to a comparison operator.  It is clear that $A$ is a PO allocation. Thus, the following will show that $A$ is exactly EF1.
	
	Suppose that $A$ is not EF1, there exists two agents $i, j\in N$,  $v(A_i)<v(A_j-g)$ for any $g\in A_j$. 
	If $|A_j| < |A_i|$, by the definition of $\mathcal{M}$, there exists $\{g_i^1, \ldots, g_i^l\}\subseteq A_i$ such that $A_j \cup \{g_i^1, \ldots, g_i^l\} \in \mathcal{F}$ and $|A_j\cup \{g_i^1, \ldots, g_i^l\}|=|A_i|$. Denote by $C= \{ g_0^1, \ldots, g_0^l\}$ an item set where every $g_0^k\in C$ is a copy item of $g_i^k\in A_i$ with $v_i(g_0^k) = v_j({g_0^k}) = 0$. Thus, based on Corollary~\ref{cor1}, there exists a bijection $\phi: A_i \to A_j \cup C$ such that $A_k - g + \phi(g) \in \mathcal{F}$ for any $k\in\{i,j\}$ and any $g\in A_k$. Due to the assumption of $v(A_i) < v(A_j)$, there must exist one pair $(g_i,\phi(g_i))$ such that $v(g_i) < v(\phi(g_i))$. Thus, we can construct another feasible allocation $B = (B_1,\ldots, B_n)$ where   $B_i=A_i-g_i+\phi(g_i)$, $B_j=A_j-\phi(g_i) + g_i$, $B_k=A_k$ for all $k\ne i,j$. Notice that $v(B_i) >v(A_i)$ and $v(B_j) \geq v(A_j - \phi(g_i)) > v(A_i)$. Thus, we have $A \prec B$ which is contradicted with $A\succeq B$. 
	
	For the remaining case of $|A_j| \geq |A_i|$, we can build similar contradictions by the above analysis. 
	Hence the leximin allocation $A$ is exactly EF1 and PO under matroid constraints. 
	\qed    
\end{proof}

The following theorem will focus on 2-valued valuations. Let $a > 1$ be a constant. For any agent $i \in N$, her valuation $v_i(\cdot)$ is 2-valued if $v_i(g) \in \{1, a\}$ for any $g\in M$.

\begin{theorem}\label{th3}
    For 2-valued valuations with $\{1, a\}$, every Max-NSW allocation guarantees $\max\{\frac{1}{a^2}, \frac{1}{2}\}$-EF1 and PO under matroid constraints.
\end{theorem}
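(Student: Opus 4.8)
The plan is to reuse the general bound from Theorem~\ref{th1} in the easy regime and to obtain the sharper ratio by a single ``add one item'' move via the matroid augmentation axiom, rather than the single-swap exchange (Corollary~\ref{cor1}) underlying Theorem~\ref{th1}, which only yields $1/2$. Since every Max-NSW allocation is already $1/2$-EF1 and PO under matroid constraints and $\max\{1/a^2,1/2\}=1/2$ once $a\ge\sqrt2$, it suffices to show that a Max-NSW allocation $A=(A_1,\dots,A_n)$ returned by Algorithm~\ref{alg1} is $1/a^2$-EF1 (PO being immediate from Observation~\ref{lemma3}). Suppose toward a contradiction that some agents $i,j$ satisfy $v_i(A_i)<\tfrac{1}{a^2}\,v_i(A_j-g)$ for all $g\in A_j$. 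If $NSW(A)=0$ we argue as in Case~2 of Theorem~\ref{th1}: giving agent $i$ a suitable nonempty independent subset of $A_j$ (or swapping $A_i,A_j$) either strictly increases the Nash welfare over the support $N^{*}$ or enlarges $N^{*}$, contradicting the choice made in Algorithm~\ref{alg1}. So assume $NSW(A)>0$; then $v_i(A_i)\ge1$, and since $v_j$ is $\{1,a\}$-valued every item is worth at least $1$ to $j$, whence $v_j(A_j)\ge|A_j|$, $v_i(A_j)\le a|A_j|$, and $|A_i|\le v_i(A_i)$.

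Write $u=v_i(A_i)\ge1$, $w=v_j(A_j)$. First I would note that the envy hypothesis forces $|A_i|<|A_j|$: otherwise $u\ge|A_i|\ge|A_j|\ge v_i(A_j)/a$, while removing from $A_j$ its most $i$-valuable item in the hypothesis gives $v_i(A_j)>a^2u+1$, and $au\ge v_i(A_j)>a^2u+1$ is impossible for $a>1$. With $|A_i|<|A_j|$, the augmentation axiom supplies $h\in A_j\setminus A_i$ with $A_i+h\in\mathcal F$; putting $B_i=A_i+h$, $B_j=A_j-h$, $B_k=A_k$ otherwise yields a feasible allocation with
\[
v_i(B_i)v_j(B_j)-v_i(A_i)v_j(A_j)=v_i(h)\,w-\bigl(u+v_i(h)\bigr)v_j(h),
\]
so $NSW(B)>NSW(A)$ whenever $v_i(h)\,w>(u+v_i(h))v_j(h)$, which (as $v_j(h)\le a$) follows from $w>u+a$ if $v_i(h)=a$ and from $w>a(u+1)$ if $v_i(h)=1$. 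To be sure of landing in a good case I would first try to acquire a high item: writing $H=\{g\in A_j:v_i(g)=a\}$, if $|H|>|A_i|$ then augmentation inside the restriction $\mathcal M|_{A_i\cup H}$ yields such an $h\in H$, and the hypothesis applied at an item of $H$ gives $v_i(A_j)>a^2u+a$, hence $w\ge|A_j|\ge v_i(A_j)/a>au+1\ge u+a$. In the complementary case $|H|\le|A_i|\le u$, so $v_i(A_j)=(a-1)|H|+|A_j|\le(a-1)u+|A_j|$; combining this with $v_i(A_j)>a^2u+a$ when $H\neq\emptyset$, or with $v_i(A_j)=|A_j|>a^2u+1$ when $H=\emptyset$, yields $|A_j|>a(u+1)$ in both situations (the two estimates reduce to $(a-1)^2\ge0$ and $(a-1)(au-1)\ge0$, respectively), so any item grabbed from $A_j$ already gives $NSW(B)>NSW(A)$. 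In every case this contradicts the maximality of $NSW(A)$, so $A$ is $1/a^2$-EF1, and combined with Theorem~\ref{th1} it is $\max\{1/a^2,1/2\}$-EF1 and PO.

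The main obstacle is conceptual rather than computational: one has to see that the single-swap exchange of Theorem~\ref{th1} caps out at $1/2$ and should be replaced by a pure ``augment $A_i$ by one item'' step, which is affordable precisely because a strongly envious agent in the $\{1,a\}$ world necessarily holds a small bundle ($|A_i|\le v_i(A_i)$) that is far less valuable than $A_j$, while $v_j(A_j)\ge|A_j|$ caps the loss incurred by agent $j$; the bookkeeping then collapses to elementary inequalities like $(a-1)^2\ge0$. After that, the only real care needed is to guarantee that the acquired item is valuable enough to $i$ (settled by restricting the matroid to $A_i\cup H$) and to dispatch the $NSW(A)=0$ sub-cases through the support-maximization step of Algorithm~\ref{alg1}.
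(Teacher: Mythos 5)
Your proof is correct, and it reorganizes the argument in a way that genuinely differs from the paper's. The paper splits on $|A_j|\le|A_i|$ versus $|A_j|>|A_i|$: in the first case it performs a one-for-one \emph{swap} (exchanging a $1$-valued item of $A_i$ for an $a$-valued item of $A_j$, justified via the base-exchange bijection of Corollary~\ref{cor1}), and only in the second case does it use the pure augmentation move you describe. Your opening observation --- that under the strong-envy hypothesis $v_i(A_i)<\tfrac{1}{a^2}v_i(A_j-g)$ one necessarily has $|A_i|<|A_j|$ (since $u\ge|A_i|$, $v_i(A_j)\le a|A_j|$ and $v_i(A_j)>a^2u+1$ are jointly incompatible with $|A_i|\ge|A_j|$) --- shows that the paper's swap case is in fact vacuous, so the whole proof collapses to the augmentation step; this is a cleaner decomposition. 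The price you pay is on the other side: because you bound $v_j(h)\le a$ crudely, you need the auxiliary case split on $|H|$ versus $|A_i|$ and the restricted augmentation inside $A_i\cup H$ to guarantee a sufficiently valuable item. The paper avoids this by keeping track of the actual value $v_j(g_j)$ of the grabbed item and checking the three sign patterns of $v_i(g_j)-v_j(g_j)$ directly, which shows that \emph{any} augmenting item already increases the Nash product. Both routes are sound; yours trades a slightly longer selection argument for the elimination of the exchange-bijection machinery, and your elementary inequalities (reducing to $(a-1)^2\ge0$ and $(a-1)(au-1)\ge0$) all check out. The $NSW(A)=0$ case is only sketched by deferral to Theorem~\ref{th1}, but that matches the paper's own (even terser) treatment and the underlying support-maximization argument of Algorithm~\ref{alg1} does close it.
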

\begin{proof}
	Given an instance $\mathcal{I} = (N, M, (v_i)_{i\in N})$ with a matroid $\mathcal{M} = (M, \mathcal{F})$. 
	$M=\{g_1, \ldots, g_m\}$ is a item set,
	$N=\{1, \ldots, n\}$ is a agent set  and $v_i(g)\in \{1,a\}$, $a>1$.
	Let $A = (A_1,\ldots, A_n)$ be a  Max-NSW allocation. 
	Due to Theorem~\ref{th1}, $A$ is $1/2$-EF1 and PO. The following will illustrate that $A$ is also $\frac{1}{p}$-EF1. 
	
	Suppose that $A$ is not $\frac{1}{a^2}$-EF1, there exists two agents $i, j\in N$,  $v(A_i)<\frac{1}{a^2}v(A_j-g)$ for any $g\in A_j$. 
	We will discuss from two cases depending on whether $|A_j| \le |A_i|$ or not.

	\noindent\textit{Case 1.} 
	If $|A_j| \le |A_i|$,  there exists one pair $(g_i,g_j)$ such that $A_i+g_j-g_i\in \mathcal{F}$, $A_j+g_i-g_j\in \mathcal{F}$, and $v_i(g_i) =1, v_i(g_j)=a$.  Construct another feasible allocation $B = (B_1,\ldots, B_n)$  as follows:
	$B_i=A_i-g_i+g_j$, $B_j=A_j-g_j+ g_i$, $B_k=A_k$ for all $k\ne i,j$. 
	For simplicity, denote $\Delta = NSW(B) - NSW(A)$. 
	The following illustrates that $NSW(B) > NSW(A)$, which contradicts the definition of $A$. 
	If $v_j(g_i)\le v_j(g_j)$, it holds that
	$\Delta
	=(a-1)v_j(A_j)-(v_j(g_j)-v_j(g_i))v_i(A_i) -(a-1)(v_j(g_j)-v_j(g_i)) 
	\geq (a-1)v_j(A_j)> 0$. 
	Otherwise, 
	$
	\Delta
	= (a-1)(v_j(A_j)-v_i(A_i)-(a-1))
	>(a-1)(v_j(A_j-g_j)-\frac{1}{a^2}v_i(A_j-g_j))
	>0
	$.

	\noindent\textit{Case 2.}      
	If $|A_j| > |A_i|$,  there exists $g_j\in A_j$ such that $A_i+g_j\in \mathcal{F}$.
	Construct another feasible allocation $B = (B_1,\ldots, B_n)$  as follows:
	$B_i=A_i+g_j$, $B_j=A_j-g_j$, $B_k=A_k$ for all $k\ne i,j$. 
	The following illustrates that $NSW(B) > NSW(A)$, which contradicts the definition of $A$. 
	In fact, if $v_i(g_j)=v_j(g_j)$,
	$
	\Delta
	=v_i(g_j)(v_j(A_j-g_j)-\frac{1}{a^2}v_i(A_j-g_j))>0.
	$
	If $v_i(g_j)>v_j(g_j)$,
	$
	\Delta
	=av_j(A_j)-v_i(A_i)-a
	>0
	$. 
	For the remaining case of $v_i(g_j)<v_j(g_j)$,
	$
	v_i(B_i)v_j(B_j)-v_i(A_i)v_j(A_j)
	=v_j(A_j)-av_i(A_i)-a
	>v_j(A_j-g_i)-\frac{1}{a}v_i(A_j-g_j)
	>0
	$.
	\qed
\end{proof}


\section{Beyond Matroid Constraints}
In this section, we proceed to study EF1 and PO allocations under constraints that go beyond matroids.
The classes of constraints we mainly focus on are $p$-extendible systems and independent systems.  

    

\subsection{$p$-Extendible System Constraints}

As mentioned previously,  a $p$-extendible system $(M, \mathcal{F})$ satisfies a key property that if $C\in \mathcal{F}, D \in \mathcal{F}$ with $C \subset D$ and if $x\notin C$ such that $C + x \in \mathcal{F}$, then there exists an item set $Y \subseteq D \setminus C$ with $|Y| \leq p$ such that $D \setminus Y + x \in \mathcal{F}$.
Notice that the swap-out set $Y$ is independent in $\mathcal{F}$ while adding a single item $x$. However, if adding several items $H$ to  $D$, $Y$ would not always admit independence. 

Building on the above observation, we will state a negative result for identical and binary valuations.


\begin{lemma}\label{lemma5}
    For identical binary valuations, there exists an instance, in which a Max-NSW allocation is exactly $\frac{2}{5}$-EF1 under 2-extendible system constraints.
\end{lemma}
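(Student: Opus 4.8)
The plan is to exhibit a concrete instance with $n=2$ identical binary valued agents and a carefully designed $2$-extendible system on a small ground set, then verify that the Max-NSW allocation produced by Algorithm~\ref{alg1} leaves one agent with a bundle that, after removing one item from the envied bundle, still falls short of the $\tfrac25$ factor. First I would set up the ground set $M$ with (say) a handful of items, partition them conceptually into a ``large block'' $L$ and a ``small block'' of singletons, and define $\mathcal{F}$ so that: (i) it is hereditary, hence an independence system; (ii) it is $2$-extendible, i.e. swapping a single new item into a larger independent set costs at most $2$ items; and (iii) the only way to give both agents positive utility forces a lopsided split — one agent can hold the large block $L$ (worth, say, $5$ under the identical binary valuation) while the other agent is confined to a set of size at most $2$. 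The key design tension is that $\mathcal{F}$ must allow $L$ to be independent and allow a size-$2$ bundle for the other agent simultaneously, but must \emph{not} allow any more balanced feasible allocation with higher Nash product.

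The second step is to enumerate the feasible allocations $(X_1,X_2)$ with $X_1,X_2\in\mathcal{F}$ and show that the Nash product $v(X_1)\cdot v(X_2)$ is maximized (over allocations giving both agents positive value, per the $N^*$-selection in Algorithm~\ref{alg1}) exactly at the allocation $A$ where $v(A_1)=2$ and $v(A_2)=5$ (numbers to be tuned so that $2 = \tfrac25\cdot 5$ is exactly the tight bound, and so that any attempt to move an item from $A_2$ to $A_1$ is blocked by the $2$-extendibility/non-matroid structure of $\mathcal{F}$ rather than being actually improving). Here I would invoke Observation~\ref{lemma3} only implicitly — what matters is pinning down $A$ as \emph{the} Max-NSW allocation selected. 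I would also check that no feasible allocation achieves a strictly larger product: candidate competitors are $(v=3,v=3)$-type splits or $(v=4,v=2)$-type splits, and the instance must be built so these are infeasible in $\mathcal{F}$ (this is exactly where the failure of the matroid augmentation axiom is exploited — in a matroid such a rebalancing swap would exist).

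Finally I would compute the EF1 ratio at $A$: for the disadvantaged agent $i$ with $v(A_i)=2$, we have $v(A_i) = 2$ and for every single item $g\in A_j$, $v(A_j - g) = 5 - 1 = 4$ (binary values, $|L|=5$ relevant items), so the best achievable is $v(A_i) = 2 = \tfrac25 \cdot 4 = \tfrac25\, v(A_j-g)$, and one checks $2 < \alpha\cdot 4$ for any $\alpha>\tfrac25$, so $A$ is exactly $\tfrac25$-EF1 and no better. I would then remark that $A$ is also PO by Observation~\ref{lemma3}, so it is a legitimate Max-NSW outcome that fails to beat $\tfrac25$-EF1.

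The main obstacle I anticipate is the simultaneous satisfaction of all three requirements on $\mathcal{F}$: it must genuinely be $2$-extendible (which requires checking the extension axiom for \emph{every} pair $C\subset D$ and every extending element $x$, not just the ones appearing in the argument), it must be hereditary, and its structure of maximal independent sets must be exactly restrictive enough that the skewed allocation $A$ is forced as the unique (Algorithm~\ref{alg1}-selected) Max-NSW allocation while more balanced feasible allocations are ruled out. Getting the capacities/sizes and the item values to make $2/5$ land exactly — and not, say, $1/3$ or $1/2$ — will require some iteration on the parameters; I would likely start from the target equality $v(A_i) = \tfrac25 v(A_j - g)$ and work backwards to choose $|L|$ and the size cap on the small bundle.
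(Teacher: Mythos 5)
There is a genuine gap here on two counts. First, the lemma is an existence-by-example statement, so the example \emph{is} the proof: you need to write down a concrete ground set and a concrete family $\mathcal{F}$, check that $\mathcal{F}$ is hereditary and $2$-extendible, enumerate the feasible splits to pin down the Nash-optimal product, and then compute the EF1 ratio. Your proposal stops at the level of a design brief (``I would set up\dots'', ``numbers to be tuned\dots'', ``will require some iteration on the parameters''), and you yourself flag that verifying $2$-extendibility for every pair $C\subset D$ is the hard part — but that verification is precisely the content that is missing. For reference, the paper's instance is $M=\{g_1,\ldots,g_8\}$ with $\mathcal{F}$ the union of the power sets of $\{g_1,g_2\}$, $\{g_1,g_3,g_4,g_7,g_8\}$, $\{g_2,g_3,g_4,g_7,g_8\}$ and $\{g_3,g_4,g_5,g_6,g_7,g_8\}$; the allocation $(\{g_1,g_2\},\{g_3,\ldots,g_8\})$ has product $2\cdot 6=12$ and gives ratio $2/5$.

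Second, the arithmetic in your target equality is wrong, and it propagates into the choice of parameters. You set $|L|=5$ so that $v(A_j-g)=4$, and then assert $2=\tfrac{2}{5}\cdot 4$; but $\tfrac{2}{5}\cdot 4=\tfrac{8}{5}\neq 2$, and $2/4=\tfrac{1}{2}$. With your parameters the disadvantaged agent is $\tfrac{1}{2}$-EF1, not $\tfrac{2}{5}$-EF1, so the lemma would not follow. To land on $2/5$ the envied bundle must have value $6$ (hence $5$ after removing one item) against a bundle of value $2$, which is exactly what the paper's eight-item instance does. A smaller point: you require that all more balanced splits (e.g.\ a $3$--$3$ or $4$--$2$ split) be \emph{infeasible}; this is stronger than necessary and is in tension with $2$-extendibility. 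In the paper's instance a $3$--$4$ split is feasible and ties at product $12$; the lemma only needs \emph{some} Max-NSW allocation to be exactly $\tfrac{2}{5}$-EF1, so it suffices that the lopsided split also attains the maximum product.
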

\begin{proof}   
	Given a 2-extendible system $(M,\mathcal{F})$, where $M=\{g_1, \ldots, g_8\}$,
	$\mathcal{F}= P(\{g_1,g_2\}) \cup P(\{g_1,g_3,g_4,g_7, g_8\}) \cup P(\{g_2,g_3,g_4,g_7, g_8\}) \cup P(\{g_3,g_4,g_5,g_6, g_7, g_8\})$. Specifically, $P(S) =\{T: T \subseteq S\}$ is denoted by the power set of an item set $S$. 
	There are two agents with identical additive valuations $v(\cdot)$ and $v(g) =1$ for every single item $g\in M$.
	It is easy to verify that $\max_{X} \prod_{i=1}^{n}v_i(X_i)=12$, and $A_1=\{g_1,g_2\}$, $A_2=\{g_3, g_4, g_5, g_6, g_7, g_8\}$ is a Max-NSW allocation.
	Thus, $v(A_1) = 2 =\frac{2}{5} \cdot v(A_2 - g)$ for all $g\in A_2$.
	\qed
\end{proof}

In the 2-extendible system in Lemma~\ref{lemma5}, when we consider adding an independent set to another independent set, let $H=\{g_1,g_2\}$, $C=\emptyset$, $D=\{g_3, g_4, g_5, g_6, g_7, g_8\}$, there exists the unique subset $Y \subseteq D \setminus C$ with $Y=\{g_3, g_4, g_5, g_6, g_7, g_8\}$ such that $D \setminus Y \cup H\in \mathcal{F}$, but $|Y|>2\cdot 2$. 
Thus, we next introduce strongly $p$-extendible systems satisfying that the swapped-out $Y$ always satisfies $|Y|\le p|H|$.

\begin{definition}[Strongly $p$-extendible System]~\label{def7}
    A strongly $p$-extendible system is a pair $(E, \mathcal{F})$ where $\mathcal{F} \subseteq 2^{E}$ is a collection of subsets of $E$ which satisfies the following properties: (1) $\emptyset \in \mathcal{F}$; (2) if $D\in \mathcal{F}$ and $C\subseteq D$, then $C\in \mathcal{F}$; (3) if $C\in \mathcal{F}, D \in \mathcal{F}$ with $C \subset D$ and if $H\cap C=\emptyset$ such that $C \cup H\in \mathcal{F}$, then there exists a subset $Y \subseteq D \setminus C$ with $|Y| \leq p|H|$ such that $D \setminus Y \cup H\in \mathcal{F}$. 
\end{definition}

\begin{proposition}\label{prop1}
   The system $(E, \mathcal{F})$ is a matroid if and only if is strongly 1-extendible.
\end{proposition}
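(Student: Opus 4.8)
The plan is to prove both directions, leveraging Lemma~\ref{lemma2} (which states that $(E,\mathcal{F})$ is a matroid iff it is $1$-extendible in the sense of Definition~\ref{def3}). The key observation is that strong $1$-extendibility is a stronger-looking property than ordinary $1$-extendibility, so one direction is essentially immediate by specialization, and the other direction requires showing that the matroid axioms actually deliver the stronger ``insert a whole independent set $H$ at once'' guarantee.

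\textbf{Strongly $1$-extendible $\Rightarrow$ matroid.} I would show strong $1$-extendibility implies ordinary $1$-extendibility, then invoke Lemma~\ref{lemma2}. Given $C \subset D$ in $\mathcal{F}$ and $x \notin C$ with $C + x \in \mathcal{F}$, apply Definition~\ref{def7}(3) with $H = \{x\}$ (noting $H \cap C = \emptyset$ and $C \cup H \in \mathcal{F}$): there is $Y \subseteq D \setminus C$ with $|Y| \le 1 \cdot |H| = 1$ and $D \setminus Y \cup H \in \mathcal{F}$, i.e.\ $D \setminus Y + x \in \mathcal{F}$. This is exactly property~(4) of Definition~\ref{def3} with $p = 1$, so $(E,\mathcal{F})$ is $1$-extendible, hence a matroid by Lemma~\ref{lemma2}. (One should double-check the degenerate case $x \in D$: then $C + x \subseteq D$, so $C + x \in \mathcal{F}$ by heredity and we may take $Y = \emptyset$; this is harmless.)

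\textbf{Matroid $\Rightarrow$ strongly $1$-extendible.} Heredity and $\emptyset \in \mathcal{F}$ are immediate, so the work is property~(3) of Definition~\ref{def7} with $p=1$. Let $C \subset D$ in $\mathcal{F}$ and $H$ with $H \cap C = \emptyset$, $C \cup H \in \mathcal{F}$; I want $Y \subseteq D \setminus C$ with $|Y| \le |H|$ and $(D \setminus Y) \cup H \in \mathcal{F}$. I would argue by induction on $|H|$ — or, more cleanly, by building up an independent set inside $D \cup H$ that contains $C \cup H$ and has size $|D|$. Concretely, start from $C \cup H \in \mathcal{F}$; while its cardinality is below $|D|$, use the augmentation axiom (property~(3) of Definition~\ref{def2}) with the larger set $D$ to add an element of $D$ not yet present, staying in $\mathcal{F}$. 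This terminates at an independent set $D'$ with $C \cup H \subseteq D'$, $D' \setminus H \subseteq D$, and $|D'| = |D|$ (the process cannot stall early because as long as $|D'| < |D|$ augmentation from $D$ produces a new element, and every element added lies in $D \setminus C \subseteq D \setminus D'$). Now set $Y = D \setminus (D' \setminus H) = D \setminus D'$ wait — more carefully, set $Y = D \setminus D'$; then $(D \setminus Y) \cup H = (D \cap D') \cup H$. Since $C \cup H \subseteq D'$ and the augmentation steps only added elements of $D$, we get $D' = (D \cap D') \cup H$ when $C \subseteq D$, so $(D \setminus Y) \cup H = D' \in \mathcal{F}$. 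Finally $|Y| = |D| - |D \cap D'| = |D| - (|D'| - |H|) = |H|$ since $|D'| = |D|$, and $Y \subseteq D \setminus C$ because $C \subseteq D \cap D'$. This gives property~(3) with the bound $|Y| = |H| \le p|H|$.

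\textbf{Main obstacle.} The delicate point is the bookkeeping in the ``matroid $\Rightarrow$ strongly $1$-extendible'' direction: one must verify that the greedily grown set $D'$ satisfies $D' = (D \cap D') \cup H$ (equivalently, that $D' \setminus H \subseteq D$ and $D' \cap H = H$), which relies on each augmentation step pulling an element strictly from $D$, and on $H \cap C = \emptyset$ together with $C \subseteq D$ so that the counting $|Y| = |H|$ comes out exactly. I expect a clean way to package this is to apply the augmentation axiom to the pair $(C \cup H,\, D \cup H)$ — but since $D \cup H$ need not be independent, it is safer to do the explicit element-by-element augmentation from $C \cup H$ toward size $|D|$ as above, and track which elements enter. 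Alternatively, one can cite that matroids are $p$-extendible for all $p$ and refine the argument to control $|Y|$; but the direct augmentation argument is self-contained and I would present that.
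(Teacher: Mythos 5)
The paper states Proposition~\ref{prop1} without providing a proof, so there is nothing to compare against; judged on its own, your argument is correct and is the natural one (specialize $H=\{x\}$ for one direction, greedily augment $C\cup H$ toward size $|D|$ inside $D$ for the other). Two small points to tidy up: the final count should be $|Y|=|D'\setminus D|\le |H|$ rather than $=|H|$, since $H$ may intersect $D$ (this only helps you); and the case $|C\cup H|\ge |D|$, where the augmentation loop never runs, should be noted explicitly --- there $Y=D\setminus(C\cup H)$ works and the bound $|Y|\le |D|-|C|\le |H|$ follows from $|C|+|H|\ge |D|$. Neither is a genuine gap.
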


Now, we establish main results about strongly $p$-extendible systems. 

\begin{lemma}\label{lemma6}
For identical binary valuations, there exists an instance, in which, Max-NSW and EF1 are incompatible and all Max-NSW allocations are exactly $\frac{2}{3}$-EF1 under strongly 2-extendible system constraints.
\end{lemma}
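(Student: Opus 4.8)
The plan is to prove the lemma with one explicit witness. Let $M=\{g_1,\dots,g_6\}$ and, using the notation $P(S)=\{T:T\subseteq S\}$ from the proof of Lemma~\ref{lemma5}, set $\mathcal{F}=P(\{g_1,g_2\})\cup P(\{g_1,g_3,g_4\})\cup P(\{g_2,g_3,g_4\})\cup P(\{g_3,g_4,g_5,g_6\})$; write $B_0=\{g_1,g_2\}$, $B_1=\{g_1,g_3,g_4\}$, $B_2=\{g_2,g_3,g_4\}$, $B=\{g_3,g_4,g_5,g_6\}$ for the four maximal independent sets, and take two agents with identical binary valuations, $v(g)=1$ for all $g\in M$. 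The four steps are: (i) check that $(M,\mathcal{F})$ is an independence system that is strongly $2$-extendible but not a matroid; (ii) compute $\max_{X}NSW(X)$ and all allocations attaining it; (iii) read off that each such allocation is exactly $\tfrac23$-EF1; (iv) exhibit a feasible EF1 allocation of smaller Nash welfare, so that EF1 and Max-NSW are incompatible on this instance.

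For (i), heredity is immediate since $\mathcal{F}$ is a union of power sets. For the strong extension property, fix $C\subsetneq D$ in $\mathcal{F}$ and $H$ with $H\cap C=\emptyset$ and $C\cup H\in\mathcal{F}$, choose a maximal set $B_j$ with $C\cup H\subseteq B_j$, and try the generic swap-out $Y=D\setminus B_j$. Then $Y\subseteq D\setminus C$ (as $C\subseteq B_j$) and $(D\setminus Y)\cup H=(D\cap B_j)\cup H\subseteq B_j\in\mathcal{F}$, so only the size bound $|Y|\le2|H|$ is at issue. From $C\subseteq D\cap B_j$ we get $|Y|=|D|-|D\cap B_j|\le|D|-|C|$, and since $|D|\le4$ this already gives $|Y|\le2|H|$ whenever $|H|\ge2$, whenever $|H|=0$ (take $Y=\emptyset$), and whenever $|H|=1$ with $|C|\ge|D|-2$. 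Only the configurations $|H|=1$, $(|D|,|C|)\in\{(4,0),(4,1),(3,0)\}$ remain; these are finitely many, and in each a suitable $Y$ with $|Y|\le2$ is exhibited directly, the key point being that inserting the single item $g_1$ (or $g_2$) into $B$ only forces discarding $\{g_5,g_6\}$ and lands inside $B_1$ (resp.\ $B_2$). Finally, the matroid augmentation axiom fails for $C=\{g_5\}$ and $D=\{g_1,g_2\}$ because neither $\{g_1,g_5\}$ nor $\{g_2,g_5\}$ is independent; hence, by Proposition~\ref{prop1}, $(M,\mathcal{F})$ is not strongly $1$-extendible, i.e.\ it is a genuine strongly $2$-extendible system (so the statement is not subsumed by the matroid case).

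For (ii), every maximal set has size at most $4$, so $\mathcal{F}$ contains no independent set of size $\ge5$; the only size-$4$ independent set is $B$, and its complement $M\setminus B=B_0$ is independent, so $A=(\{g_1,g_2\},B)$ is feasible with $NSW(A)=2\cdot4=8$. A larger product would require, on six items, two disjoint independent sets of sizes $(3,3)$, i.e.\ a partition of $M$ into two size-$3$ independent sets; but the complement of every size-$3$ independent set ($B_1$, $B_2$, or a $3$-subset of $B$) either contains one of the non-independent pairs $\{g_1,g_5\},\{g_1,g_6\},\{g_2,g_5\},\{g_2,g_6\}$ or contains $\{g_1,g_2\}$ together with some element of $B$, and in either case is not independent. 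Hence $\max_{X}NSW(X)=8$, every Max-NSW allocation has bundle sizes $\{2,4\}$, the size-$4$ bundle must be $B$, and the disjoint size-$2$ bundle must be $\{g_1,g_2\}$; so $A$ (up to relabeling the two identical agents) is the unique Max-NSW allocation, with utilities $(2,4)$. For (iii), $v(A_1)=2=\tfrac23\cdot3=\tfrac23\,v(A_2-g)$ for every $g\in A_2$ while $v(A_2)=4\ge v(A_1-g)$, so $A$ is $\tfrac23$-EF1 and no larger constant works; and $A$ is not EF1 since $v(A_1)=2<3=v(A_2-g)$. For (iv), $(\{g_1,g_3\},\{g_4,g_5\})$ is feasible, EF1, and has Nash welfare $4<8$, so no feasible allocation is simultaneously EF1 and Max-NSW; this completes the proof.

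The main difficulty is the joint design of $\mathcal{F}$ underlying steps (i) and (ii): strong $2$-extendibility forces enough size-$3$ independent sets bridging $g_1,g_2$ into $B$ — otherwise moving $g_1$ into $B$ would need to evict three items — while $g_1$ and $g_2$ must remain incompatible with $g_5$ and $g_6$, so that $M$ admits no $(3,3)$ independent partition and the Max-NSW value is pinned at $2\times4$ rather than $3\times3$. The four power sets above are chosen precisely to meet both demands, and the only genuinely case-based part of the argument is the short finite check of strong $2$-extendibility for the configurations $|H|=1$, $(|D|,|C|)\in\{(4,0),(4,1),(3,0)\}$ flagged in step (i).
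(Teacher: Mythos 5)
Your proposal is correct and uses essentially the same witness as the paper's own proof: the same family $\mathcal{F}=P(\{g_1,g_2\})\cup P(\{g_1,g_3,g_4\})\cup P(\{g_2,g_3,g_4\})\cup P(\{g_3,g_4,g_5,g_6\})$ with identical all-ones valuations, the same unique Max-NSW allocation $(\{g_1,g_2\},\{g_3,g_4,g_5,g_6\})$, and the same ratio $2=\tfrac{2}{3}\cdot 3$ (the paper's ground set lists $g_7,g_8$, apparently vestigial from Lemma~\ref{lemma5}, which play no role). Your write-up goes further than the paper in actually verifying strong $2$-extendibility, non-matroidness, and uniqueness of the Max-NSW allocation, all of which the paper merely asserts; these checks are correct.
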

  \begin{proof}   
	Given a 2-extendible system $(M,\mathcal{F})$, where $M=\{g_1, \ldots, g_8\}$,
	$\mathcal{F}= P(\{g_1,g_2\}) \cup P(\{g_1,g_3,g_4\}) \cup P(\{g_2,g_3,g_4\}) \cup P(\{g_3,g_4,g_5,g_6\})$. Specifically, $P(S) =\{T: T \subseteq S\}$ is denoted by the power set of an item set $S$. 
	There are two agents with identical additive valuations $v(\cdot)$ and $v(g) =1$ for every single item $g\in M$.
	It is easy to verify that $\max_{X} \prod_{i=1}^{n}v_i(X_i)=8$, and $A_1=\{g_1,g_2\}$, $A_2=\{g_3, g_4, g_5, g_6\}$ is the unique Max-NSW allocation.
	Thus, $v(A_1) = 2 =\frac{2}{3} \cdot v(A_2 - g)$ for all $g\in A_2$.
	\qed
\end{proof}
\begin{theorem}\label{th4}
For identical binary valuations, every Max-NSW allocation guarantees $\frac{1}{p}$-EF1 under strongly $p$-extendible system constraints. 

\end{theorem}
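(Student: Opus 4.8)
\smallskip\noindent\emph{Proof plan.} The plan is to assume, toward a contradiction, that the Max-NSW allocation $A=(A_1,\dots,A_n)$ returned by Algorithm~\ref{alg1} violates $\tfrac{1}{p}$-EF1 for some ordered pair of agents $(i,j)$, and then exhibit a feasible allocation that Algorithm~\ref{alg1} would have preferred. Since the common valuation $v$ is binary, set $M_1=\{g\in M:v(g)=1\}$; by the hereditary property each $A^{1}_k:=A_k\cap M_1$ is independent, $v(A_k)=|A^{1}_k|$, and because incomplete allocations are allowed we may freely discard value-$0$ items. Write $N^{*}=\{k:v(A_k)>0\}$; recall that Algorithm~\ref{alg1} first makes $|N^{*}|$ as large as possible and then, among feasible allocations whose positive-utility set is $N^{*}$, makes $\prod_{k\in N^{*}}v(\cdot)$ as large as possible. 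Pareto optimality of $A$ is immediate from Observation~\ref{lemma3}, so only the fairness bound needs work.

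First I would clear the degenerate pairs. If $v(A_j)=0$, then $v(A_j\setminus\emptyset)=0\le p\cdot v(A_i)$ and EF1 holds for $(i,j)$. If $v(A_i)=0$, I claim $v(A_j)\le 1$: otherwise choose a value-$1$ item $g\in A_j$, note $\{g\}\in\mathcal F$, and replace $A_i$ by $\{g\}$ and $A_j$ by $A_j-g$ (keeping all other bundles); this feasible allocation has $|N^{*}|+1$ positive-utility agents, contradicting maximality of $|N^{*}|$. With $v(A_j)\le 1$, letting $Z$ be the unique value-$1$ item of $A_j$ (if any) gives $v(A_i)=0\ge\tfrac{1}{p}v(A_j\setminus Z)$.

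The main case is $v(A_i),v(A_j)>0$. If $\tfrac{1}{p}$-EF1 fails for $(i,j)$, then taking $Z$ to be a value-$1$ item of $A_j$ yields $v(A_i)<\tfrac{1}{p}(v(A_j)-1)$, i.e.\ $v(A_j)\ge p\,v(A_i)+2$. Now I would apply the defining property of a strongly $p$-extendible system (Definition~\ref{def7}) with $C=\emptyset$, $D=A^{1}_j$ and $H=A^{1}_i$: this yields $Y\subseteq A^{1}_j$ with $|Y|\le p|A^{1}_i|=p\,v(A_i)$ such that $W:=(A^{1}_j\setminus Y)\cup A^{1}_i\in\mathcal F$. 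Since $|A^{1}_j\setminus Y|\ge v(A_j)-p\,v(A_i)\ge 2$, pick any $g^{*}\in A^{1}_j\setminus Y$. As $A^{1}_i\cup\{g^{*}\}\subseteq W$, heredity gives $B_i:=A^{1}_i\cup\{g^{*}\}\in\mathcal F$; set $B_j:=A_j-g^{*}\in\mathcal F$ and $B_k:=A_k$ for $k\neq i,j$. The allocation $B$ is feasible, its positive-utility set is again $N^{*}$, and
\[
 (v(A_i)+1)(v(A_j)-1)-v(A_i)v(A_j)=v(A_j)-v(A_i)-1\ge (p-1)v(A_i)+1>0 ,
\]
so $\prod_{k\in N^{*}}v(B_k)>\prod_{k\in N^{*}}v(A_k)$, contradicting the choice of $A$ by Algorithm~\ref{alg1}. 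Hence $A$ is $\tfrac{1}{p}$-EF1; for $p=1$ this recovers exact EF1, consistent with the matroid case.

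The delicate point — and where ``strongly'' is essential — is the choice of instantiation of the axiom: instead of transferring items one by one (which uses only plain $p$-extendibility and, by Lemma~\ref{lemma5}, can degrade to $\tfrac{2}{5}$-EF1 when $p=2$), one inserts the \emph{whole} small bundle $A^{1}_i$ into the large bundle $A^{1}_j$ in a single step, so that the bound $|Y|\le p|H|$ leaves at least $v(A_j)-p\,v(A_i)\ge 2$ items of $A_j$ that stay jointly independent with all of $A^{1}_i$; moving just one of them to agent $i$ keeps agent $j$'s utility positive, and everything after that is a one-line estimate of the Nash product.
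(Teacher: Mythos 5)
Your proof is correct, and its core is the same as the paper's: apply the strongly $p$-extendible exchange property with $C=\emptyset$, $D=A_j$, $H=A_i$ to obtain a swap-out set $Y\subseteq A_j$ with $|Y|\le p\,|A_i|$, and combine this with the integrality consequence $v(A_j)\ge p\,v(A_i)+2$ of the assumed EF1 violation to build a Nash-improving reallocation. Where you differ is the endgame: the paper splits into the case where some single $g\in A_j$ already satisfies $A_i+g\in\mathcal F$ and, otherwise, gives agent $i$ either the set $Y$ itself or an extension $C$ of $Y$ of size $|A_i|+1$ (two further sub-cases on whether $|Y|>|A_i|$), whereas you observe that $|A_j^1\setminus Y|\ge 2$ guarantees an item $g^*$ of $A_j$ that is jointly independent with all of $A_i^1$, so a single-item transfer suffices and all of the paper's sub-cases collapse into one computation $(v(A_i)+1)(v(A_j)-1)>v(A_i)v(A_j)$. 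This buys a shorter and more uniform argument; you also treat the degenerate situations $v(A_i)=0$ or $v(A_j)=0$ explicitly via the tie-breaking of Algorithm~\ref{alg1} (maximizing the number of positive-utility agents), which the paper's proof passes over with an appeal to ``non-wastefulness.'' The only point to be careful about, which you handle correctly, is that the exchange axiom must be invoked once with the \emph{whole} bundle $A_i^1$ as $H$ rather than item by item --- exactly the distinction between strong and plain $p$-extendibility that motivates Definition~\ref{def7}.
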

\begin{proof}
	Given an instance $\mathcal{I} = (N, M, v)$ with a $p$-extendible system $\mathcal{M} = (M, \mathcal{F})$. 
	$M=\{g_1, \ldots, g_m\}$ is a item set,
	$N=\{1, \ldots, n\}$ is a agent set and $v(\cdot)$ is a binary valuation for all agents. Let $A^*$ be a Max-NSW allocation.  
	Suppose, for contradiction, that $A^*$ is not $1/p$-EF1, and that there exist two agents $i,j\in N$ and any single item $g\in A^*_j$, 
	\begin{align*}\label{eq5}
		v(A^*_i) < \frac{1}{p} \cdot (v(A^*_j) - v(g)),~~~\forall g\in A^*_j.\tag{5}
	\end{align*}
	\noindent\textit{Case 1.} If there exists $g^* \in A^*_j$ such that $A^*_i + g^* \in \mathcal{F}$.
	Let $B$ be an alternative allocation such that $B_i = A_i^* + g^*$, $B_j = A_j^* - g^*$, and $B_k = A_k$ for all $k\neq i,j$. Due to $A_i^* + g^*\in \mathcal{F}$ and $A^*_k\in \mathcal{F}$ for all $k$, we get that $B$ is feasible under matroid $(M, \mathcal{F})$.
	Furthermore, $NSW(B) = v(B_i) v(B_j)\prod_{k\neq i,j} v(B_k) = NSW(A^*) + v(g)[v(A^*_j) - v(A^*_i) - v(g)] > NSW(A^*)$, where the inequality holds by Inequality (5). It contradicts the optimality of $NSW(A^*)$.

	\noindent\textit{Case 2.} If $A^*_i + g \notin \mathcal{F}$ for all $g \in A^*_j$. 
	By the definition of $p$-extendible system, for every item $ g \in A_i^*$, we can construct a feasible bundle $(A_j^* + g) \setminus Y_g\in \mathcal{F}$ by adding $g$ to $A_j^*$ and deleting some subset $Y_g \subset A_j^*$ with $|Y_g| \leq p$.
	Iteratively moving all items from $A^*_i$ to $A^*_j$, we can construct another feasible bundle by further deleting at most $p|A_i^*|$ items from $A_j^*$. Combining non-wastefulness of $A^*$, $|A_j^*| > p |A_i^*| + 1$. 
	Thus, there exists $Y\subset A_j^*$ with $|Y| \leq p |A_i^*|$ and $A_i^* \cup A_j^* \setminus Y \in \mathcal{F}$. 
	
	(i) If $|Y| > |A_i|$, then assume that $|Y| = |A^*_i| + d$.  Combining with $|Y| \leq p|A_i^*|$, $0 < d \leq (p-1)|A_i^*|$. 
	Let $B$ be an alternative allocation such that $B_i = Y$, $B_j = A^*_i + A_j^* - Y$, and $B_k = A_k^*$ for all $k$. Thus, $NSW(B) = NSW(A^*) + d (|A^*_j| - |A_i^*| - d) \geq NSW(A^*) + d (|A_j^*| - p|A_j^*|) > NSW(A^*)$, contradicting the optimality of $NSW(A^*)$.
	
	(ii) If $|Y| \leq |A_i|$, then we can construct an extension $C$ such that $Y \subset C \subset A_j^*$ and $|C| = |A_i^*| + 1$. Notice that $C \in \mathcal{F}$.
	Let $B$ be an alternative allocation where $B_i = C$, $B_j = A_i^* + A_j^* - C$, and $B_k = A_k^*$ for all $k$. 
	Thus, we obtain that $NSW(B) = NSW(A^*) + (|A_j^*| - |C|) > NSW(A^*)$ which contradicts the optimality of $NSW(A^*)$.
	\qed
\end{proof}



\subsection{Independent System Constraints}

This section focus on independence system constraints with two cases of valuations: general additive valuations and lexicographic preferences. We start with two negative results in additive settings.

\begin{lemma}\label{lemma7}
There exists an instance that no $(\frac{1}{4} + \varepsilon)$-EF1 and Max-NSW allocation under independence system constraints. 
\end{lemma}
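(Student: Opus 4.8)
The plan is to construct an explicit instance with two agents and an independence system that forces a unique Max-NSW allocation, then show that in this allocation one agent's envy toward the other, even after removing one item, is a factor close to $4$ away from being eliminated. The overall strategy mirrors the lower-bound constructions of Lemma~\ref{lemma4} and Lemma~\ref{lemma5}: exploit the fact that in a (non-matroid) independence system, an agent $i$ holding a small bundle $A_i$ may be unable to augment it using any single item from the large bundle $A_j$, because the relevant ``add one, remove many'' exchanges are simply not available. The factor $1/4$ (as opposed to $1/2$ for matroids) should come from combining two independent obstructions, each contributing a factor of roughly $2$: one from the hereditary-only structure preventing single-item transfers (as in the $p$-extendible case), and one from the lack of a balancing exchange in the valuation profile (as in the matroid lower bound).

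First I would fix $N=\{1,2\}$ and pick a ground set $M$ partitioned into a small block $S$ and a large block $L$, with $|L|$ roughly $2k$ and $|S|$ roughly $k$ for a large parameter $k$. The family $\mathcal{F}$ will be taken as a union of power sets of a few carefully chosen ``maximal feasible'' sets — this automatically gives the hereditary property — designed so that the only feasible way to give agent $1$ a large-value bundle is to hand her all of $L$, while agent $2$ can only receive sets drawn from $S$; crucially there should be no feasible set that mixes a useful item of $L$ with $A_2\subseteq S$, so no single-item swap improves the Nash product. I would then assign values: items in $L$ worth something like $2$ to agent $1$ and $0$ to agent $2$ (or a tiny $\delta$), and items in $S$ worth $1$ to agent $1$ and a large value to agent $2$, tuned so that the allocation $A^*=(A^*_1,A^*_2)$ with $A^*_1=S$ (or a subset) and $A^*_2=L$ — wait, this needs care — is the unique Max-NSW allocation while $v_1(A^*_1)$ is about a quarter of $v_1(A^*_2\setminus\{g\})$ for every $g\in A^*_2$.

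Concretely I expect the clean version to look like: agent $1$'s bundle in the Max-NSW allocation has value $V$, the other bundle $A^*_2$ has items each valued (by agent $1$) roughly equally so that $v_1(A^*_2\setminus\{g\}) \approx 4V - o(V)$, and feasibility of $\mathcal{F}$ blocks every allocation $B$ obtained from $A^*$ by moving one item or by any exchange that would raise the product. The verification that $A^*$ is the \emph{unique} Max-NSW allocation is the computational heart: one writes $NSW(X)$ as a function of how many high-value items each agent gets, checks it is a concave (or unimodal) function maximized only at the claimed point, and separately checks the handful of ``boundary'' feasible sets coming from the other maximal sets in $\mathcal{F}$ give strictly smaller product. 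Then setting $\varepsilon = \Theta(1/k)$ and letting $k\to\infty$ drives the ratio below $1/4+\varepsilon$.

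The main obstacle will be engineering $\mathcal{F}$ so that \emph{simultaneously} (a) the Max-NSW allocation is exactly the bad one and is unique, (b) no single-item transfer from the large bundle to the small one is feasible (this is what kills the matroid-style $1/2$ bound), and (c) there is no alternative feasible allocation with a larger agent-support or larger Nash product that Algorithm~\ref{alg1} would pick instead. Balancing these — especially making sure the ``diagonal'' exchanges that rescue the $1/2$ ratio in Theorem~\ref{th1} are genuinely unavailable here, which is exactly where matroid augmentation fails for general independence systems — is delicate, and I would likely iterate on the block sizes and the two-valued structure (something like values in $\{1,a\}$ with $a$ large, echoing Lemma~\ref{lemma4}) until all three conditions hold cleanly, then present the final instance in a small table as in Table~\ref{table2}.
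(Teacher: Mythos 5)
Your proposal is a plan rather than a proof: at no point do you exhibit a concrete instance, and every load-bearing step is deferred (``I would then assign values\ldots tuned so that\ldots'', ``I would likely iterate on the block sizes\ldots until all three conditions hold cleanly''). The genuinely hard part of this lemma is precisely the calibration you leave open, and your guiding intuition for it --- that the factor $4$ arises from ``combining two independent obstructions, each contributing a factor of roughly $2$'' --- does not point at the actual mechanism. The constant $4$ comes from a single quadratic comparison: if both agents share one valuation and the large bundle $L$ is itself hereditary, then splitting $L$ evenly between the two agents is always feasible and yields Nash product about $v(L)^2/4$, versus $v(S)\cdot v(L)$ for the unbalanced allocation. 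Hence the unbalanced allocation can only be the (unique) Nash maximizer when $v(L)\le 4\,v(S)$, and the construction must sit exactly at this threshold. Without identifying this comparison you cannot verify uniqueness of the Max-NSW allocation, nor know where to place the values, so your conditions (a)--(c) cannot all be discharged.

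The paper's instance is much simpler than the one you sketch and needs none of the non-identical, two-block, $\{1,a\}$-valued machinery: take two agents with a common valuation, $M=\{g_1,\ldots,g_m\}$ with $m=1+1/\delta$, and $\mathcal{F}=P(\{g_1\})\cup P(\{g_2,\ldots,g_m\})$, where $v(g_1)=1+\delta$ and $v(g_k)=4\delta$ for $k\ge 2$. Heredity alone forbids mixing $g_1$ with any other item, so one agent is stuck with value $1+\delta$; the other gets the $1/\delta$ small items worth $4$ in total, giving product $4(1+\delta)$, which strictly beats the balanced split's product of $4$ --- this is the threshold comparison above, and it makes the bad allocation the unique maximizer. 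Removing one item from the large bundle leaves value $4-4\delta$, so the ratio tends to $1/4$ as $\delta\to 0$, and choosing $\delta$ small relative to $\varepsilon$ defeats $(\tfrac14+\varepsilon)$-EF1. If you want to salvage your write-up, replace the two-block non-identical construction with this one (or supply the full arithmetic for yours, including the uniqueness check against the balanced split); as it stands the proof has a genuine gap.
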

\begin{proof}
	Given an independence system $(M, \mathcal{F})$, where $M = \{g_1, \ldots, g_m\}$ with $m = 1 + 1/\delta$, $\mathcal{F} = P(\{g_1\}) \cup P(\{g_2, \ldots, g_m\})$, and $\delta > 0$ be a small enough constant. There are two agents $\{1, 2\}$ that have identical valuation function $v(\cdot)$, and the values are as follows: $v(g_1) = 1 + \delta$, $v(g_k) = 4\delta$ for all single item $g_k \neq g_1$. 
	
	Consider an arbitrary feasible allocation $X$ under $(M,\mathcal{F})$. Due to identical valuation $v(\cdot)$, assume w.l.o.g. that $v(X_1) \leq v(X_2)$. 
	If $g_1 \in X_1 \cup X_2$, then $NSW(X) = v(X_1) v(X_2) \leq (1 + \delta) (4\delta \cdot 1/\delta) = 4(1 + \delta)$ since one agent has the only one large item $g_1$ and the other one have at most the remaining small items with value of $4\delta$. 
	Otherwise if $X_1 \cup X_2 \subseteq M - g_1$, then $NSW(X) \leq 2 \cdot (4\delta \cdot 1/(2\delta)) = 4$, and the equality holds when both agents have $1/(2\delta)$ small items. Thus, $NSW(X) \leq 4(1 + \delta)$ for all $X$. 
	Let $A^*$ be an allocation that $A^*_1 = \{g_1\}$ and $A^*_2 = \{g_2, \ldots, g_m\}$. By the definition of $(M,\mathcal{F})$, $A^*_1, A^*_2\in \mathcal{F}$ and thus $A^*$ is a feasible allocation. Therefore, $A^*$ is the unique Max-NSW allocation with  $NSW(A^*) = 4(1 + \delta)$. However, $v(A^*_1) = (1 - \delta^2) /4 \cdot v(A^*_2 - g) < 1/4\cdot v_1(A_2^* - g)$ for all $g\in A^*_2$. Thus, $A^*$ is not $(1/4 + \varepsilon)$-EF1 for any $\varepsilon > 0$. 
	\qed
\end{proof}

Next, we establish the main result about general additive valuations.

\begin{theorem}\label{th5}
For general additive valuations, every Max-NSW allocation guarantees $\frac{1}{4}$-EF1 and PO under independence system constraints.
\end{theorem}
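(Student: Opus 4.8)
The plan is to argue by contradiction, mirroring the structure of the proof of Theorem~\ref{th1}, but since the matroid exchange bijection (Corollary~\ref{cor1}) is no longer available I will replace the single-good transfer by a purely hereditary operation: re-partitioning the bundle $A_j$ between agents $i$ and $j$. Besides whole-bundle swaps $A_i\leftrightarrow A_j$, this internal re-partition is essentially the only modification guaranteed feasible under a mere hereditary system, and it is precisely where the ratio degrades from $1/2$ to $1/4$.

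Let $A=(A_1,\dots,A_n)$ be the Max-NSW allocation returned by Algorithm~\ref{alg1}; it is PO by Observation~\ref{lemma3}. Suppose $A$ is not $\tfrac14$-EF1, so there are agents $i,j$ with $v_i(A_i) < \tfrac14\, v_i(A_j - g)$ for every $g\in A_j$. The degenerate case $NSW(A)=0$ is handled exactly as in Case~2 of Theorem~\ref{th1}: if $v_j(A_j)=0$ one exhibits a feasible allocation with strictly larger positive-utility support, contradicting the choice of $N^*$ in Algorithm~\ref{alg1}; if $v_j(A_j)>0$ the argument below applies verbatim with $NSW$ replaced by the support-restricted product $\prod_{k\in N^*} v_k(\cdot)$. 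So assume $NSW(A)>0$; then $v_i(A_i)>0$, $v_j(A_j)>0$, and every factor of $NSW(A)$ is positive.

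Choose $g^* \in \arg\max_{g\in A_j} v_i(g)$. Instantiating the failure hypothesis at $g=g^*$ and using additivity gives the key inequality $v_i(A_j) - v_i(g^*) = v_i(A_j - g^*) > 4\, v_i(A_i)$ (in particular $|A_j|\ge 2$). Now partition $A_j = P \cup Q$ by the greedy longest-processing-time rule for the weights $v_i(\cdot)$: list the items of $A_j$ in non-increasing $v_i$-order and add each item to whichever part currently has smaller $v_i$-weight. A short induction shows the running imbalance never exceeds the largest weight $v_i(g^*)$, so $|v_i(P)-v_i(Q)| \le v_i(g^*)$ and hence $\min\{v_i(P),v_i(Q)\} \ge \tfrac12\bigl(v_i(A_j)-v_i(g^*)\bigr) > 2\, v_i(A_i)$. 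Since $v_j(P)+v_j(Q)=v_j(A_j)$, one of the two parts has $v_j$-weight at least $\tfrac12 v_j(A_j)$; call it $Q$. Form the allocation $B$ with $B_i=P$, $B_j=Q$, and $B_k=A_k$ for $k\ne i,j$ (the goods of $A_i$ are simply left unallocated, which is permitted). It is feasible because $P,Q\subseteq A_j\in\mathcal{F}$. As both parts have $v_i$-weight above $2\,v_i(A_i)$, we have $v_i(B_i)>2\,v_i(A_i)$, while $v_j(B_j)=v_j(Q)\ge\tfrac12 v_j(A_j)$; thus $v_i(B_i)\,v_j(B_j) > v_i(A_i)\,v_j(A_j)$, and multiplying by the common positive factor $\prod_{k\ne i,j} v_k(A_k)$ yields $NSW(B)>NSW(A)$, a contradiction. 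Hence $A$ is $\tfrac14$-EF1 and PO.

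I expect the conceptual step — realizing that the correct move is an internal re-partition of $A_j$, rather than a transfer or a whole-bundle swap — to be the crux, because the naive alternatives (give agent $i$ only $g^*$, or swap $A_i\leftrightarrow A_j$) fail exactly on the family of instances in Lemma~\ref{lemma7}, where the goods valuable to $i$ coincide with those valuable to $j$. The factor $4$ then decomposes as $2\times 2$: one factor $2$ from halving $v_i(A_j)$, and one from being unable to dictate which half is valuable to $j$; Lemma~\ref{lemma7} shows this is tight. The only routine-but-fiddly points are the imbalance bound for the greedy partition and the bookkeeping in the $NSW(A)=0$ case.
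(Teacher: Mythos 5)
Your proof is correct and follows essentially the same strategy as the paper's: when $\tfrac14$-EF1 fails for a pair $(i,j)$, you re-partition $A_j$ into two feasible (by heredity) halves and reassign them to $i$ and $j$ so that $i$'s share exceeds $2\,v_i(A_i)$ while $j$ retains at least $\tfrac12 v_j(A_j)$, contradicting NSW-optimality, with the degenerate $NSW=0$ case handled via the support-maximality built into Algorithm~1. The only (immaterial) difference is the partitioning rule: you balance the split with respect to $v_i$ by greedy LPT and let $v_j$ choose $j$'s half, whereas the paper interleaves the $v_j$-sorted order of $A_j^* - g_t$ (reserving the top item for $j$'s side) and lets $v_i$ choose $i$'s half; both constructions deliver the same $2\times\tfrac12$ bound.
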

\begin{proof}
	Given an independence system $(M, \mathcal{F})$ and any corresponding instance $\mathcal{I} = (N, M, (v_i)_{i\in N})$. 
	Let $A^*$ be a Max-NSW allocation returned by Algorithm 1. 
	We will analyze it in two situations as follows:

	\noindent\textit{Case 1.}
	$NSW(A^*) > 0$. In this case, $v_k(A_k^*) > 0$ for every agent $k$. 
	We assume that $A^*$ is not $1/4$-EF1, that is, there exist two agents $i,j\in N$ and any item $g\in A^*_j$ such that $v_j(A^*_j - g) > 4\cdot v_i(A_i^*)$. The following will construct an alternative allocation. 
	
	Consider agent $j$'s bundle $A_j^* = \{g_1,\ldots, g_t \}$, where $v_j(h_1) \leq \cdots \leq v_j(g_{t})$. Notice that the item $g_t$ has the maximum value in $A_j^*$. Define $H_1 = \{g_j \in A_j^* - g_t: j \equiv 1(mod~2) \}$ and similarly define $H_2 = \{g_j \in A_j^* - g_t: j \equiv 0 (mod~2)\}$. Due to $A_j^* \in \mathcal{F}$, $H_1, H_2 \in \mathcal{F}$. Furthermore, at least one of the following inequalities holds: i) $v_j(H_1) \leq v_j(H_2) \leq v_j(H_1 + g_t)$; ii)  $v_j(H_2) \leq v_j(H_1) \leq v_j(H_2 + g_t)$. Let $H_{max} = \arg\max \{v_i(H_1), v_i (H_2)\}$ be the preferable bundle for agent $i$.
	Then, we construct an alternative allocation $B$, where $B_i = H_{max}$, $B_j = A_j^* \setminus H_{max}$, and $B_k = A_k^*$ for all $k\neq i,j$. By the definition of independence system, $B$ is a feasible allocation under $(M,\mathcal{F})$. Therefore, $v_i(B_i) = v_i(H_{max}) \geq 1/2 \cdot v_i(A_j^* - h_t) > 2 \cdot v_i (A_i^*)$, $v_j(B_j) = v_j (A_j^*) -v_j(H_{max}) \geq 1/2 \cdot v_j(A^*_j)$, and thus $NSW(B) = v_i(B_i) v_j (B_j) \prod_{k\neq i,j} v_k(B_k) > v_i(A_i) v_j (A_j) \prod_{k\neq i,j} v_k(A_k) = NSW(A^*)$, which gives us the desire contradiction.

	\noindent\textit{Case 2.} $NSW(A^*) = 0$. 
	Define $N^+_{X} = \{k\in N, v_k(X_k) > 0 \}$ as the agents that obtain positive values from allocation $X$.
	Due to $v_i (\emptyset) = 0$, $v_i(A^*_j - g) > 0$, and thus $|A^*_j| \geq 2$. 
	Based on Algorithm 1, $v_j(A^*_j) > 0$ and $|N^+_{A^*}| \geq |N_{X}^+|$ for any Max-NSW allocation $X$.

	If agent $i$ also has positive valuation, i.e., $v_i(A^*_i) > 0$, we can construct an alternative allocation by similar method in Case 1, which towards a contradiction with the definition of $A^*$. 
	Otherwise if $i\notin N_{A^*}^+$, then $v_i (A^*_i) = 0$. Let $B$ be an alternative allocation where $B_i = A^*_j - g$, $B_j = \{g\}$, $B_k = A^*_k$ for all $k$, and item $g$ is an arbitrary item of $A^*_j$. By the definition of independence system, $B$ is a feasible allocation under $(M,\mathcal{F})$. Moreover, $v_i(B_i) = v_i(A^*_j - g) > 0$, $v_j(B_j) = v_j (\{g\}) > 0$, and $v_k (B_k) = v_k (A^*_k)$ for all $k$. Thus, $|N^+_{B}| > |N^+_{A^*}|$, contradicting the definition of $A^*$.
	\qed
\end{proof}
\begin{corollary}
For general additive valuations, every Max-NSW allocation guarantees $\max\{\frac{1}{p}, \frac{1}{4}\}$-EF1 and PO under strongly $p$-extendible system constraints. 
\end{corollary}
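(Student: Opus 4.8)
The plan is to derive the corollary from two results already in hand --- Theorem~\ref{th5} and Theorem~\ref{th4} --- by exploiting that strongly $p$-extendible systems sit between matroids and general independence systems, and then to keep whichever of the two approximation guarantees is larger for the given $p$.

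\textbf{Step 1: structural containment and the $\frac14$ bound.} Properties (1) and (2) in Definition~\ref{def7} are literally properties (1) and (2) of Definition~\ref{def1}, so every strongly $p$-extendible system $(M,\mathcal F)$ is in particular an independence system; by Proposition~\ref{prop1} it also contains every matroid (case $p=1$), and relaxing the bound $|Y|\le p|H|$ only enlarges the family as $p$ grows. Since the feasibility predicate ``$X_i\in\mathcal F$ for all $i$'' denotes the same collection of allocations in the strongly-$p$-extendible view and in the independence-system view, the Max-NSW allocation $A^*$ returned by Algorithm~\ref{alg1} coincides in both. Hence Theorem~\ref{th5} applies verbatim and gives that every Max-NSW allocation is $\frac14$-EF1, and Observation~\ref{lemma3} (whose proof uses only heredity and additivity) gives Pareto optimality. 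This already settles the regime $p\ge 4$, where $\max\{\frac1p,\frac14\}=\frac14$.

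\textbf{Step 2: the $\frac1p$ bound for small $p$.} Here I would adapt the swap argument behind Theorem~\ref{th4} to additive valuations. Suppose $A^*$ is not $\frac1p$-EF1, witnessed by agents $i,j$ with $v_i(A_i^*)<\frac1p\bigl(v_i(A_j^*)-v_i(g)\bigr)$ for every $g\in A_j^*$. If some single $g^*\in A_j^*$ satisfies $A_i^*+g^*\in\mathcal F$, moving $g^*$ from $j$ to $i$ strictly increases the Nash product by the EF1-violation inequality, a contradiction. Otherwise property~(3) of Definition~\ref{def7} lets us transfer the items of $A_i^*$ into $A_j^*$ one at a time, deleting at most $p$ items per transfer, producing a set $Y\subseteq A_j^*$ with $|Y|\le p|A_i^*|$ and $A_i^*\cup(A_j^*\setminus Y)\in\mathcal F$; rebalancing this independent set between $i$ and $j$ so that $i$'s share is worth a $\frac1p$-fraction of $v_i(A_j^*)$ again raises the Nash product, contradicting optimality of $A^*$. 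Combining with Step~1, every Max-NSW allocation is $\max\{\frac1p,\frac14\}$-EF1 and PO.

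\textbf{Main obstacle.} The genuinely delicate point is Step~2: one must verify that the iterated-extension/rebalancing argument of Theorem~\ref{th4} still produces a \emph{strict} Nash-product improvement at the generality required, carefully tracking the additive valuations $v_i,v_j$ rather than mere cardinalities, and using that the intermediate sets $D\setminus Y\cup H$ remain independent precisely because the system is \emph{strongly} $p$-extendible --- Lemma~\ref{lemma5} shows this strengthening is necessary, since for ordinary $p$-extendible systems the swapped-out set can fail to be independent and the guarantee degrades (to $\frac25$ in that example). The remaining ingredients --- the inclusion of constraint classes, the invariance of ``feasible'' and ``Max-NSW'' under it, and the fact that PO depends only on the valuations and the feasibility relation --- are routine.
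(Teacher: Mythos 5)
Your Step 1 is exactly the intended derivation of the $\frac{1}{4}$ part: a strongly $p$-extendible system is in particular an independence system, so Theorem~\ref{th5} and Observation~\ref{lemma3} transfer verbatim. The problem is Step 2. The paper never proves, and does not intend to prove, a $\frac{1}{p}$-EF1 guarantee for \emph{general additive} valuations under strongly $p$-extendible systems: Theorem~\ref{th4} is stated and proved only for identical binary valuations, Table~\ref{table1} records the $\max\{\frac{1}{p},\frac{1}{4}\}$ bound in the ``identical-bi'' row, and the conclusion explicitly lists ``whether for additive valuations $\frac{1}{p}$-EF1 and Max-NSW are compatible'' as the main open question. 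Worse, the claim you set out to prove in Step 2 is actually \emph{false} for $p=1$: by Proposition~\ref{prop1} a strongly $1$-extendible system is a matroid, so $\max\{\frac{1}{1},\frac{1}{4}\}=1$ would demand exact EF1, yet Lemma~\ref{lemma4} exhibits a partition-matroid instance with additive valuations whose unique Max-NSW allocation is not even $(\frac{1}{2}+\varepsilon)$-EF1. The corollary therefore has to be read with the $\frac{1}{p}$ term inherited from Theorem~\ref{th4}, i.e.\ restricted to identical binary valuations; under that reading it is an immediate combination of Theorems~\ref{th4} and~\ref{th5} and needs no new argument.

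Concretely, the reason your adaptation of the Theorem~\ref{th4} swap argument cannot go through is that every gain computed there is cardinality-driven: in Case~2 the improved allocation gives agent $i$ the swapped-out set $Y$ (or an extension $C$ with $|C|=|A_i^*|+1$), and the inequality $NSW(B)>NSW(A^*)$ follows only because value equals cardinality, so $v(Y)=|Y|$ and $v(B_j)=|A_i^*|+|A_j^*|-|Y|$. With heterogeneous additive valuations you have no control over which items end up in $Y$ --- property (3) of Definition~\ref{def7} only bounds $|Y|$ --- so $v_i(Y)$ may be arbitrarily small and $v_j(A_i^*\cup A_j^*\setminus Y)$ may lose far more than agent $i$ gains. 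Your phrase ``rebalancing this independent set so that $i$'s share is worth a $\frac{1}{p}$-fraction'' presupposes an independent subset of the right value for $i$ that still leaves $j$ enough, and no such subset is guaranteed to exist; Lemma~\ref{lemma4} is precisely an instance where it does not. To match what the paper actually establishes, drop Step 2 and cite Theorem~\ref{th4} for the $\frac{1}{p}$ part under identical binary valuations.
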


\subsection{Lexicographic Preferences}
Heretofore, we have studied additive valuations and  several special cases. Although these valuations crucial in practice, they are not worked on the scenes with ordering preferences.  
This section introduces lexicographic preferences.


\begin{algorithm}[h]
    \caption{Round-Robin under Independence System Constraints}\label{alg2}
    
    {\bf Input:} An IS $(M, \mathcal{F})$, $v_i(\cdot)_{i\in N}$, and an independence oracle. \\
    
    {\bf Output:} An allocation $A = (A_1,\ldots, A_{|N|})$. \\
    
    Initialize: $(A_1,\ldots, A_{|N|}) = (\emptyset,\ldots, \emptyset)$;\\
    
    \For{$r=1$ to $\lceil {\frac{|M|}{n}}\rceil$}
    {
    \For{$i=1$ to $|N|$}
    {
    \If{$\exists g\in M,$ s.t. $A_i +g \in \mathcal{F}$}
    {
        Let $g^*\in \arg \max_{g\in M} \{v_i(g): A_i + g\in\mathcal{F}\}$;\\
        $A_i=A_i+g^*$;\\
        $M=M-g^*$;
    }
\Else
{$N=N-i$;\\
}
}   
        
    }
    {\bf Return} $A = (A_1,\ldots, A_{|N|})$.   
\end{algorithm}
\begin{definition}[Lexicographic Preference]
Given an instance $(M, N, (v_i)_{i\in N})$. Every agent $i$ has a linear ordering $\pi_i = (g_{i1}, \ldots, g_{im})$ such that $v_i(g_{i1}) > \cdots > v_i(g_{im})$. Let $(o_S(g_{i1}),\ldots, o_S(g_{im}))$ be a vector of bundle $S$ allocated to agent $i$: $o_S(g_{ik}) = v(g_{ik})$ for all $g_{ik} \in S$ and $o_S(g_{ik}) = 0$ otherwise.
For any two bundles $X, Y \subseteq M$, agent $i$'s preference is induced by lexicographic order as follows:
\begin{equation*}
 v_i(X) > v_i (Y) \Leftrightarrow  \exists \ell, o_X(g_{i\ell}) > o_Y(g_{i\ell}) \mbox{~and~} \forall k <\ell,  o_X(g_{ik}) = o_Y(g_{ik}).
\end{equation*}
\end{definition}
To compute an exact EF1 and Max-NSW allocation, we present Algorithm~\ref{alg2} by Round-Robin method.  
In this algorithm, we make a slight modification of the classic Round-Robin that 
if any item is added to someone's bundle would break the independence requirement, he is being skipped over. 

Next, we provide a positive  result about lexicographic preferences. 
\begin{theorem}\label{th6}
For lexicographic preferences, the allocation computed by Algorithm~\ref{alg2} guarantees exactly EF1 and PO under independence system constrains. 
\end{theorem}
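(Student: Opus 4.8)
The plan is to analyze Algorithm~\ref{alg2} directly, tracking which item each agent picks in each round, and to exploit the fact that lexicographic preferences make picking behaviour essentially deterministic and ``prefix-greedy.'' First I would set up notation: let $g_i^{(r)}$ denote the item agent $i$ receives in round $r$ (possibly undefined if $i$ was removed). The key structural observation is that when it is agent $i$'s turn, she takes the item she most prefers among all items $g$ still available with $A_i + g \in \mathcal{F}$; by the hereditary property, if no available item can be added to $A_i$ then (since $A_i \in \mathcal{F}$ throughout) agent $i$ is permanently stuck and removed. I would first prove PO: because each agent greedily grabs her lexicographically-best feasible item at each step and unallocated items are exactly those that no remaining agent can feasibly take, no Pareto-improving reallocation exists — any agent who could be made better off would need an item that was available to her earlier and not taken, contradicting greedy choice (here the single-peaked structure of lexicographic preference is essential: improving any agent in the lex order requires giving her an item strictly higher in her ranking than something she holds, but she would have taken it first).

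For EF1, fix two agents $i$ and $j$; I want to show $\exists Z \subseteq A_j$, $|Z| \le 1$, with $v_i(A_i) \ge v_i(A_j \setminus Z)$, which under lexicographic preferences means: after removing one item from $A_j$, agent $i$'s ``value vector'' $o_{A_i}$ dominates $o_{A_j \setminus Z}$ in lex order. The standard Round-Robin argument pairs up the $r$-th pick of $i$ with the $r$-th pick of $j$: since $i$ picks before $j$ within a round (or in the symmetric case we drop $j$'s first pick as the removed item $Z$), at the moment agent $i$ makes her $r$-th choice, every item that $j$ will pick in round $r$ or later was still available. The crucial point is that agent $i$ picked the $v_i$-best \emph{feasible} item; I need to argue that $i$'s $r$-th pick is at least as good (in $i$'s ranking) as $j$'s $r$-th pick \emph{minus at most one offset}. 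The subtlety versus the unconstrained case is feasibility: it is conceivable that $j$'s round-$r$ item $g$ satisfies $A_j^{(<r)} + g \in \mathcal{F}$ but $A_i^{(<r)} + g \notin \mathcal{F}$, so $i$ could not have taken it.

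The main obstacle, then, is precisely this feasibility mismatch: in a general independence system the bundle agent $i$ has accumulated may block items that agent $j$'s bundle does not block, so the clean ``$i$'s $k$-th pick $\succeq$ $j$'s $k$-th pick'' inequality can fail. I expect the resolution to be an exchange/shifting argument: I would show that whenever $i$ is forced to skip an item $g$ that $j$ later takes, agent $i$ must instead be taking some item she prefers to $g$ (otherwise she'd have been free to take $g$ — wait, no, feasibility is the issue), so more carefully: I would track the \emph{indices in $i$'s ordering} of the items $i$ holds versus those $j$ holds, and show by an injection-with-shift-by-one argument that the sorted sequence of $i$-ranks of $A_i$ is pointwise no worse than that of $A_j$ after deleting $j$'s top item. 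Establishing that this injection exists — i.e. that each item in $A_j$ (bar one) can be charged to an item in $A_i$ that appears no later in $\pi_i$ — is the technical heart, and it will use: (a) $i$ moves before $j$ each round so availability is in $i$'s favour, (b) the hereditary property to guarantee that an item deemed infeasible for $i$ at her turn was infeasible because $A_i$ already ``used up'' capacity, and (c) a counting argument on round numbers. If that injection cannot be made to work in full generality for arbitrary independence systems, the fallback is to prove it round-by-round using the explicit skip-rule of Algorithm~\ref{alg2} (an agent removed in round $r$ stays removed), which bounds $|A_j| - |A_i| \le 1$ whenever $i$ is not removed before $j$, collapsing the argument to the finite, bounded-difference case.
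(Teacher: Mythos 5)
There is a genuine gap in your EF1 argument, and it stems from missing the one simplification that makes this theorem easy. You correctly identify the feasibility mismatch (agent $j$'s round-$r$ item may be infeasible to add to $A_i$), but the resolution you propose --- an injection charging each item of $A_j$ minus one to an item of $A_i$ no later in $\pi_i$, i.e.\ a sorted-rank domination of the two bundles --- is never actually established and in fact cannot hold in general. Under an arbitrary independence system the cardinalities can diverge arbitrarily: in the instance of Lemma~\ref{lemma8} one agent ends with $2\eta$ items and the other with $\eta$, so no injection from $A_j\setminus\{g\}$ into $A_i$ exists. Your fallback ($|A_j|-|A_i|\le 1$ when $i$ is not removed before $j$) does not cover the case where $i$ \emph{is} removed early, which is exactly the case in that instance; so the ``technical heart'' of your plan is a statement that is false for general independence systems and you would be left with no proof.

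The point you miss is that lexicographic preferences make all of this unnecessary: to conclude $v_i(A_i)\ge v_i(A_j\setminus Z)$ you only need the single top item of $A_i$ in $\pi_i$ to outrank every item of $A_j\setminus Z$, since the bundles are disjoint and the lexicographic comparison is decided at the highest-ranked item of the symmetric difference. That top item is $h_i^1$, agent $i$'s first pick, which is her most preferred item among all feasible singletons available at her first turn; by heredity every item of $A_j$ is a feasible singleton, and every item of $A_j$ except possibly $h_j^1$ (when $j<i$) was still available at that moment. Hence $v_i(h_i^1)>v_i(g)$ for all $g\in A_j\setminus\{h_j^1\}$ and EF1 follows immediately --- no round-by-round pairing, no injection, and the feasibility mismatch you worry about never enters. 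This is the route the paper's proof takes. Your PO sketch is in the right spirit (a chain argument: an improving item for agent $d$ must have been taken by an earlier agent $c$, who in turn must be compensated by a yet-earlier agent, terminating at agent $1$ who provably cannot improve), but as written it is an assertion rather than a proof; you would need to make the chain explicit as the paper does.
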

\begin{proof}
	Given an instance $\mathcal{I} = (N, M, (v_i)_{i\in N})$ with an independence system  $\mathcal{M} = (M, \mathcal{F})$, 
	where
	$M=\{g_1, \ldots, g_m\}$ is a item set,
	$N=\{1, \ldots, n\}$ is a agent set  and $v_i(g)\in \{1,a\}$, $a>1$.
	Let $A$ be the allocation returned by Algorithm~\ref{alg2}. 
	For each $t$-th iteration of outer for-loop, denote by $h_i^t$ the item allocated to agent $i$ in the $t$-th iteration. 
	
	To prove that $A$ is EF1, we consider arbitrary two agents $i, j \in N$. 
	By the procedure of Algorithm~\ref{alg2}, $v_i (h_i^t) \geq v_i (h_j^{t+1})$ for every iteration $1\leq t \leq \lceil {\frac{m}{n}}\rceil$. If $i < j$, then agent $i$ select items before agent $j$ in the first iteration. 
	Thus, $v_i(h_i^1) \geq v_i (g)$ for all $g\in A_j$. 
	By lexicographic preference, $v_i (A_i) \geq v_i (A_j)$.
	Otherwise if agent $j$ select her item before agent $i$ in each iteration. 
	Due to $v_i (h_i^t) \geq v_i (h_j^{t+1})$ for each iteration $t$, $v_i (A_i) \geq v_i (A_j - h_j^1)$. Hence, the output $A$ is exactly EF1.
	
	To prove that $A$ is PO, we suppose that $B$ is an alternative allocation such that $B_k\in \mathcal{F}$ for all $k$, where $v_d(B_d) >v_d (A_d)$ for some agent $d$, and $v_k (B_k) \geq v_k(A_k)$ for any other $k\neq d$. 
	Since $v_d(B_d) >v_d (A_d)$, 
	there must exist an item $o \in B_d \setminus A_d$ such that $v_d(o) > v_d(h_i^1)$. 
	By the procedure of Algorithm~\ref{alg2}, the item $o$ is allocated to an agent $c < d$, 
	which selects her item before $d$ in the first iteration. 
	Thus, $o = h_c^1 \in A_c \setminus B_c$. 
	Due to $v_c(B_c) \geq v_c(A_c)$, 
	there exists another item $o'\in B_c \setminus A_c$ satisfying  $v_c(o') > v_c(o)$. 
	Since that agent 1 obtain her most preferable item $h_1^1$, $v_1(h_1^1) \geq v_1(g)$ for all $g\in B_1$, we have $v_1(A_1) \geq v_1(B_1)$. 
	By repeatedly finding $o'$, there must exist an agent $1\leq a < d$ satisfying $v_a(A_a) > v_a(B_a)$. 
	It contradicts the assumption of $A$. Thus, the output $A$ is PO. 
	
	\qed
\end{proof}

Although Algorithm~\ref{alg2} performs perfect in lexicographic preferences, this algorithm performs unsatisfactory in additive valuations. 
\begin{lemma}\label{lemma8}
For additive valuations, there exists an instance that the allocation returned by Algorithm~2 may not guarantee $(\frac{1}{2} +\varepsilon)$-EF1 and PO under partition matroid constraints. 
\end{lemma}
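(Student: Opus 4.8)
The plan is to construct, for any fixed $\varepsilon>0$, a partition‑matroid instance on which Algorithm~2 returns an allocation that is not $\bigl(\tfrac12+\varepsilon\bigr)$-EF1; since failing this already means the output is not simultaneously $\bigl(\tfrac12+\varepsilon\bigr)$-EF1 and PO, this settles the lemma. The driving observation is that the ``skip and remove'' rule can leave agent~$1$ with about twice as many items as agent~$2$, while agent~$2$ values essentially every one of agent~$1$'s items almost as much as an item in its own bundle, so deleting a single item from agent~$1$'s bundle cannot push agent~$2$'s envy below the factor $\tfrac12$.

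For an integer $k$ chosen large (as a function of $\varepsilon$) I would take two agents and two categories: $E_1$ of capacity $k$ containing $k$ ``big'' items, and $E_2$ of capacity $k$ containing $3k$ ``medium'' items, the surplus medium items serving only to enlarge $|M|$ so that the outer loop of Algorithm~2 runs $\lceil |M|/n\rceil=2k$ rounds (a capacity‑$0$ dummy category would do equally well). Agent~$1$ values every big item strictly above every medium item, with one distinguished big item ranked first. Agent~$2$ values the distinguished big item enormously, but values every other big item slightly \emph{below} every medium item, so that among the items agent~$2$ can actually reach the medium items always come first.

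The key steps are then: (1) trace the algorithm. In rounds $1,\dots,k$ agent~$1$ empties $E_1$ while agent~$2$ collects $k$ medium items; at the start of round $k+1$ both are saturated in their category, so agent~$1$ can still take a medium item but agent~$2$ has no feasible item and is removed; over rounds $k+1,\dots,2k$ agent~$1$ fills its remaining $E_2$ capacity. This gives $|A_1|=2k$ (all $k$ big items plus $k$ medium items) and $|A_2|=k$ (medium items). (2) Bound the envy. Taking each medium item worth $1$ to agent~$2$ and each non‑distinguished big item worth $1-\delta$, agent~$2$ values its own bundle at $k$ and values every item of $A_1$ other than the distinguished one at about $1$; since the distinguished item is agent~$2$'s unique maximiser inside $A_1$, the best single deletion leaves residual value about $2k-1$. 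As $k/(2k-1)=\tfrac12+\tfrac1{2(2k-1)}$, choosing $k$ large and $\delta$ small yields $v_2(A_2)<\bigl(\tfrac12+\varepsilon\bigr)\,v_2(A_1\setminus Z)$ for every singleton $Z$.

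I expect the main difficulty to be making the trace airtight: one must fix the tie‑breaking so agent~$2$ always prefers a medium item to a big one (the $1-\delta$ perturbation does this), verify that the outer‑loop length is really $2k$ (hence the padding and the $\varepsilon$‑dependent size), and check that deleting the distinguished item is the optimal single deletion so that no other removal rescues the $\bigl(\tfrac12+\varepsilon\bigr)$-EF1 bound. Finally, if one wants a separate witness that the output need not even be Pareto optimal, a four‑item instance with no binding constraint suffices: with $v_1=(10,6,5,7)$ and $v_2=(100,3,2,1)$ on items $(p,q,r,s)$, Algorithm~2 returns $(\{p,s\},\{q,r\})$, which is Pareto‑dominated by $(\{q,r,s\},\{p\})$ since agent~$1$ prefers the three cheaper items together to $\{p,s\}$ while agent~$2$ strongly prefers $\{p\}$ to $\{q,r\}$.
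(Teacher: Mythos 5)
Your construction is correct and takes essentially the same approach as the paper's: a partition matroid under which Round-Robin locks agent~2 out after $k$ picks while agent~1 keeps picking for $2k$ rounds, forcing a value ratio of $k:(2k-1)\to\tfrac12$ after the best single deletion. The paper engineers the lockout with $\eta$ singleton categories and identical all-ones valuations (implicitly relying on favorable tie-breaking), whereas you steer the trace with perturbed valuations, which is slightly more robust; your separate four-item Pareto-optimality counterexample also checks out, though it is not needed since failing $(\tfrac12+\varepsilon)$-EF1 alone suffices.
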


\begin{proof}
Given an instance $\mathcal{I} = (N, M, (v_i)_{i\in N})$ with an independence system  $\mathcal{M} = (M, \mathcal{F})$. 
$M=\{g_1,\ldots,g_{3\eta}\}$ divided by $\eta$ categories $C_1=\{g_1,\ldots, g_{2\eta}\}$, $C_k = \{g_{2\eta + k}\}$ for any other $k \neq 1$,
$\mathcal{F} =\{S\subseteq M, |S \cap C_1| \leq \eta,|S \cap C_k| \leq 1 ~\text{for} ~k\neq 1\}$. 
There are two agents having identical additive valuation functions $v(\cdot)$: $v(g) = 1$ for all $g\in M$. 

Let $\eta > 1$ be a sufficiently large constant.
Execute the procedure of Algorithm~\ref{alg2}, agent 1 gets $\{g_{2\eta + 1}, \ldots g_{3\eta}\}$, agent 2 gets $\{g_1, \ldots, g_{\eta}\}$ in the first $\eta$ iterations. 
By the definition of $\mathcal{F}$, 
in each iteration $\eta < t \leq 2\eta$, agent 1 gets one more item $g_{\eta + t}$ but agent 2 cannot get any new item. 
Thus, $A_1 = \{ g_{\eta + 1}, \ldots, g_{3\eta}\}$ and $A_2 = \{g_1,\ldots, g_{\eta} \}$. It implies that $v(A_2) < (\frac{1}{2} + \frac{1}{2\eta - 1})\cdot v(A_1 - g)$. Hence, the lemma holds by setting $\varepsilon = \frac{1}{2\eta -1}$. 
\qed
\end{proof}

\section{Conclusions and Future Directions}
The tension between fairness and efficiency is a focal point of allocation problems; this tension appears even without computational considerations. 
When every allocation is feasible, Max-NSW allocations ensure EF1 and PO for additive valuations.
However, with constraints, there is an insurmountable gap between EF1 and Max-NSW, so the approximation analysis comes in handy. 
In this paper, we study the relatively broad constrained settings that agents have additive valuations and obtain
several tight results for approximate EF1 and Max-NSW allocations.
We show that, under matroid constraints, every Max-NSW allocation is $\frac{1}{2}$-EF1 and PO. 
Under beyond matroid constraints, the approximation ratio is relaxed to $\frac{1}{4}$ in other independence systems. 
When we set the constraints on strongly $p$-extendible systems, in which the swapped-out set $Y$ is independent, every Max-NSW allocation is $\max\{\frac{1}{p}, \frac{1}{4}\}$-EF1 and PO for identical binary valuations. As a matter of course, the most obvious open question is whether for additive valuations, $\frac{1}{p}$-EF1 and Max-NSW are compatible.

Other promising future directions are: (i) generalizing our results beyond additive valuations, and (ii) designing efficient algorithms to compute  approximate EF1 and PO allocations under matroid or independence system constraints. 
%
%

\bibliographystyle{splncs04}
\bibliography{sample-bibliography}

\appendix

\end{document}